\title{Decidability of Intersection Emptiness versus Regular
 Separability} 
\author{Ramanathan S. Thinniyam}
{Max Planck Institute for Software Systems (MPI-SWS), Germany} 
{thinniyam@mpi-sws.org} 
{https://orcid.org/0000-0002-9926-0931} 
{} 
\author{Georg Zetzsche}
{Max Planck Institute for Software Systems (MPI-SWS), Germany} 
{georg@mpi-sws.org} 
{https://orcid.org/0000-0002-6421-4388} 
{} 
\authorrunning{R. S. Thinniyam and G. Zetzsche}
\keywords{Regular separability, intersection emptiness, decidability}
\theoremstyle{definition}
\newtheorem{thm}{Theorem}[section]
\newtheorem{defn}[thm]{Definition}
\newtheorem{lem}[thm]{Lemma}
\newtheorem{prob}[thm]{Problem}
\newtheorem{rem}[thm]{Remark}
\title{Regular Separability and Intersection Emptiness are Independent
Problems}
\date{\today}
\newcommand{\calM}{\mathcal{M}}	
\newcommand{\calT}{\mathcal{T}}	
\newcommand{\calV}{\mathcal{V}}	
\newcommand{\calR}{\mathcal{R}}	
\newcommand{\calA}{\mathcal{A}}	
\newcommand{\calH}{\mathcal{H}}	
\newcommand{\calD}{\mathcal{D}}	
\newcommand{\calC}{\mathcal{C}} 
\newcommand{\calP}{\mathcal{P}} 
\newcommand{\N}{\mathbb{N}}
\newcommand{\Z}{\mathbb{Z}}
\newcommand{\rev}[1]{{#1}^{\mathsf{rev}}}
\newcommand{\regsep}[2]{#1\mathord{\mid}#2}
\newcommand{\pseudo}[1]{\mathsf{pseudo}{#1}}
\newcommand{\power}[1]{\mathsf{power}{#1}}
\newcommand{\cntlang}[1]{\mathcal{I}(#1)}
\newcommand{\Inter}[2]{\mathsf{IE}(#1,#2)}
\newcommand{\RegSep}[2]{\mathsf{RS}(#1,#2)}
\newcommand{\Empty}[1]{\mathsf{Empty}(#1)}
\newcommand{\Inf}[1]{\mathsf{Inf}(#1)}
\newcommand{\ltr}[1]{\mathtt{#1}}
\newcommand{\langof}[1]{\mathsf{L}(#1)}
\newcommand{\reset}{\mathsf{r}}
\newcommand{\stacks}[2]{\mathcal{S}^{#1}_{#2}}
\newcommand{\instr}[2]{I^{#1}_{#2}}
\newcommand{\push}[1]{\mathsf{push}_{#1}}
\newcommand{\pop}[1]{\mathsf{pop}_{#1}}
\newcommand{\tops}{\mathsf{top}}
\newcommand{\rew}[1]{\mathsf{rew}_{#1}}
\DeclareDocumentCommand{\autstep}{O{}}{%
  \rightarrow_{#1}
}
\DeclareDocumentCommand{\autsteps}{O{}}{%
  \rightarrow^*_{#1}
}
\DeclareDocumentCommand{\autstepp}{O{} m}{%
  \xrightarrow{#2}_{#1}
}
\DeclareDocumentCommand{\autstepps}{m}{%
  \xrightarrow[*]{#1}
}
\newcommand{\lelex}{<_{\mathsf{lex}}}
\tikzset{gadget/.style={->,>=stealth,initial text=,minimum size=0pt,auto,on grid,scale=1,inner sep=1pt,node distance=3cm}}
\tikzset{every state/.style={minimum size=0pt,inner sep=1pt,fill=black!10,draw=black!70,thick}}
\newcommand{\newhopa}{
\begin{tikzpicture}[gadget, node distance=2.1cm]
  \node[state] (init') {$q'_0$};
  \node[state] (init) [right=5.5cm of init'] {$q_0$};
  \node[state] (final) [right=1.5cm of init] {$q_f$};
  \node[state] (p) [right= 2.5cm of final] {$p$};
  \node[state] (final') [accepting by arrow, right= 2.5cm of p] {$q_f'$};

  \draw[dashed,rounded corners=6pt,color=black!70,thick] ($(init.west)+(-0.2,-0.4)$) rectangle ( $(final.east)+(0.2,0.4)$ ); 

  \path (init') edge node {$\ltr{1}|\push{k+2}\rew{\#}\push{k+1}\rew{\bot}$} (init);
  \path (final) edge node {$\varepsilon|\pop{k+1}$} (p);
  \path (p) edge [loop above] node {\begin{tabular}{c}$\varepsilon|\ltr{0}|\pop{k+1}\push{k+2}$ \\ $\ltr{0}|\ltr{1}|\pop{k+1}\push{k+2}$\end{tabular}} (p);
  \path (p) edge [loop below] node {$\varepsilon|\#|\pop{k+2}$} (p);

  \path (p) edge node {$\varepsilon|\bot'|\varepsilon$} (final');
\end{tikzpicture}
}
\newcommand{\newedge}{
  \begin{tikzpicture}[gadget, node distance=2.1cm]
    \node[state] (p) {$p$};
    \node[state] (q) [right=5cm of p]{$q$};
    \path (p) edge node {$\varepsilon|\gamma|\push{1}\rew{a}\push{k+1}\pop{1}v$} (q);
\end{tikzpicture}
}
\begin{document}
\maketitle

\begin{abstract}
  The problem of \emph{regular separability} asks, given two languages
  $K$ and $L$, whether there exists a regular language $S$ with
  $K\subseteq S$ and $S\cap L=\emptyset$.  This problem has recently
  been studied for various classes of languages. All the results on
  regular separability obtained so far exhibited a noteworthy
  correspondence with the intersection emptiness problem: In each
  case, regular separability is decidable if and only if intersection
  emptiness is decidable. This raises the question whether under mild
  assumptions, regular separability can be reduced to intersection
  emptiness and vice-versa.
  
  We present counterexamples showing that none of the two problems can
  be reduced to the other. More specifically, we describe language
  classes $\calC_1$, $\calD_1$, $\calC_2$, $\calD_2$ such that
  (i)~intersection emptiness is decidable for $\calC_1$ and $\calD_1$,
  but regular separability is undecidable for $\calC_1$ and $\calD_1$
  and (ii)~regular separability is decidable for $\calC_2$ and
  $\calD_2$, but intersection emptiness is undecidable for $\calC_2$
  and $\calD_2$.
\end{abstract}

\section{Introduction} 
\label{sec:introduction}

The \emph{intersection emptiness problem} for language classes $\calC$
and $\calD$ asks for two given languages $K$ from $\calC$ and $L$ from
$\calD$, whether $K\cap L=\emptyset$. If $\calC$ and $\calD$ are
language classes associated to classes of infinite-state systems, ,then
intersection emptiness corresponds to verifying safety properties in
concurrent systems where one system of $\calC$ communicates with a
system of $\calD$ via messages or shared memory\cite{bouajjani2003generic}.  The
idea of separability
is to decide whether two given
languages are not
only disjoint, but whether there exists a finite, easily verifiable,
certificate for disjointness (and thus for safety).  Specifically, the
\emph{$\mathcal{S}$ separability problem} for a fixed class
$ \mathcal{S}$ of separators and language classes $\calC$ and $\calD$
asks, for given languages $K$ from $\calC$ and $L$ from $\calD$,
whether there exists a language $S \in \mathcal{S}$ with
$K\subseteq S$ and $S\cap L=\emptyset$.

There is extensive literature dealing with the separability problem,
with a range of different separators considered.  One line of work
concerns separability of regular languages by
separators from a variety of regular languages. Here, the
investigation began with a more general problem, \emph{computing
  pointlikes} (equivalently, the \emph{covering
  problem})~\cite{almeida1999,Gool2019a,DBLP:journals/lmcs/PlaceZ18},
but later also concentrated on separability
(e.g. \cite{DBLP:conf/lics/Place15,DBLP:conf/mfcs/PlaceRZ13,DBLP:conf/stacs/PlaceZ15,
  DBLP:journals/corr/PlaceZ14,DBLP:conf/csr/PlaceZ17,
  DBLP:conf/lics/PlaceZ17}). Moreover, separability has been studied
for regular tree languages, where separators are either piecewise
testable tree languages~\cite{DBLP:conf/icalp/Goubault-Larrecq16} or
languages of deterministic tree-walking
automata~\cite{DBLP:journals/fuin/Bojanczyk17}. For
non-regular input languages, separability has been investigated with
\emph{piecewise testable
  languages}~(PTL)~\cite{DBLP:journals/dmtcs/CzerwinskiMRZZ17} and
generalizations thereof~\cite{DBLP:conf/lics/Zetzsche18} as
separators.  Separability of subsets of trace
monoids~\cite{choffrut2006separability} and commutative
monoids~\cite{DBLP:conf/stacs/ClementeCLP17} by recognizable subsets
has been studied as well.

A natural choice for the separators is the class of \emph{regular
  languages}.  On the one hand, they have relatively high separation
power and on the other hand, it is usually verifiable whether a given
regular language is in fact a separator. For instance, they generalize
piecewise testable languages but are less powerful than context-free
languages (CFL).  Since the intersection problem for CFL is
undecidable, it is not easy to check if a given candidate CFL is a
separator.

This has motivated a recent research effort to understand for which
language classes $\calC,\calD$ regular separability is
decidable~\cite{lasota2019regular,DBLP:conf/stacs/ClementeCLP17,DBLP:conf/icalp/ClementeCLP17}. An early result was that regular
separability is undecidable for CFL (by this we
mean that both input languages are
context-free)~\cite{SzymanskiW-sicomp76,DBLP:journals/jacm/Hunt82a}. More
recently, it was shown that regular separability is undecidable
already for one-counter languages~\cite{DBLP:conf/lics/CzerwinskiL17},
but decidable for several subclasses of vector addition systems
(VASS): for one-dimensional VASS~\cite{lasota2019regular},
for commutative VASS languages~\cite{DBLP:conf/stacs/ClementeCLP17},
and for Parikh automata (equivalently,
$\Z$-VASS)~\cite{DBLP:conf/icalp/ClementeCLP17}. Moreover, it is
decidable for languages of well-structured transition
systems~\cite{DBLP:conf/concur/CzerwinskiLMMKS18}. Furthermore,
decidability still holds in many of these cases if one of the inputs
is a general VASS language~\cite{CzerwinskiZetzsche2019a}. However, if
both inputs are VASS languages, decidability of regular
separability remains a challenging open problem.

These results exhibit a striking correspondence between regular
separability and the intersection problem: In all the cases where
decidability of regular separability has been clarified, it is
decidable if and only if intersection is decidable.  In fact, in the
case of well-structured transition systems, it even turned out that
two languages are regularly separable if and only if they are
disjoint~\cite{DBLP:conf/concur/CzerwinskiLMMKS18}. Moreover, deciding
regular separability usually involves non-trivial refinements of the
methods for deciding intersection. Furthermore, so far the only method
to show undecidability of regular separability is to adapt
undecidability proofs for the intersection problem~\cite{DBLP:journals/jacm/Hunt82a,DBLP:conf/lics/CzerwinskiL17,DBLP:conf/lics/Zetzsche18}.

In light of these observations, there was a growing interest in
whether there is a deeper connection between regular separability and
intersection emptiness. In other words: \emph{Is regular separability
  just intersection emptiness in disguise?} It is conceivable that
under mild assumptions, regular separability and intersection
emptiness are mutually reducible.  An equivalence in this spirit
already exists for separability by PTL: If $\calC$ and $\calD$ are
closed under rational transductions, then separability by PTL for
$\calC$ and $\calD$ is decidable if and only if the simultaneous
unboundedness problem is decidable for $\calC$ and for
$\calD$~\cite{DBLP:journals/dmtcs/CzerwinskiMRZZ17}.

\subparagraph*{Contribution} We show that regular separability and
intersection emptiness are independent problems: Each problem can be
decidable while the other is undecidable. Specifically, we present
language classes $\calC_1$, $\calD_1$, $\calC_2$, $\calD_2$, so that
(i)~for $\calC_1$ and $\calD_1$, regular separability is undecidable,
but intersection emptiness is decidable and (ii)~for $\calC_2$ and
$\calD_2$, regular separability is decidable, but intersection
emptiness is undecidable. Some of these classes have been studied
before (such as the higher-order pushdown languages), but some have
not (to the best of our knowledge). However, they are all natural in
the sense that they are defined in terms of machine models, are closed
under rational transductions, and have decidable emptiness and
membership problems. We introduce two new classes defined by counter
systems that accept based on certain numerical predicates. These
predicates can be specified either using reset vector addition systems
or higher-order pushdown automata.



\section{Preliminaries} 
\label{sec:preliminaries}
We use $\Sigma$ (sometimes $\Gamma$) to denote a finite set of
alphabets and $\Sigma^*$ to denote the set of finite strings (aka
words) over the alphabet $\Sigma$.  To distinguish between expressions
over natural numbers and expressions involving words, we use
typewriter font to denote letters, e.g. $\ltr{a}$, $\ltr{0}$,
$\ltr{1}$, etc.  For example, $\ltr{0}^n$ is the word consisting of an
$n$-fold repetition of the letter $\ltr{0}$, whereas $0^n$ is the
number zero. The empty string is denoted $\varepsilon$. If $S\subseteq\N$
we write $\ltr{a}^S$ for the set $\{\ltr{a}^n \mid n\in S\}\subseteq \ltr{a}^*$
and $2^S$ for the set $\{2^n \mid n\in S\}\subseteq\N$.

We define the map $\nu \colon\{ \ltr{0},\ltr{1}\}^* \rightarrow \N$
which takes every word to the number which it represents in binary
representation: We define $\nu(\varepsilon)=0$ and
$\nu(w\ltr{1})=2\cdot\nu(w)+1$ and $\nu(w\ltr{0})=2\cdot \nu(w)$ for
$w\in\{\ltr{0},\ltr{1}\}^*$.  For example,
$\nu(\ltr{1}\ltr{1}\ltr{0})=6$.  Often we are only concerned with
words of the form $\{\ltr{0}\} \cup \ltr{1}\{\ltr{0},\ltr{1}\}^*$.

Languages are denoted by $L,L',K$ etc. and the language of a machine
$M$ is
denoted by $\langof{M}$. Classes of languages are denoted by
$\calC$, $\calD$, etc.

\begin{defn}
	\label{transducer}
	An \textit{asynchronous transducer} $\calT$ is a tuple $\calT=(
	Q,\Gamma,\Sigma,E,q_0,F)$
	with a set of finite states $Q$, finite output alphabet $\Gamma$,
	finite input
	alphabet $\Sigma$, a set of edges $E \subseteq Q \times
	\Gamma^*
	\times \Sigma^* \times Q$, initial state $q_0 \in Q$ and set of final
	states
	$F \subseteq Q$. We write $p \xrightarrow{v|u} q$ if $(p,u,v,q) \in
	E$ and the machine reads $v$ in state $p$, outputs $u$ and moves to
	state $q$. We also write $p \autstepps{w|w'} q$ if there are
	states $q_0,q_1 \cdots q_n$ and words $u_1,u_2, \cdots ,u_n, v_1,v_2,
	\cdots , v_n$ such that $p=q_0,q=q_n,w'=u_1u_2\cdots u_n,
	w=v_1v_2\cdots v_n$ and $q_i \xrightarrow{v_i|u_i}
	q_{i+1}$ for all $0 \leq i \leq n$.

	The \textit{transduction} $T \subseteq \Gamma^* \times
	\Sigma^*$ generated by the transducer $\calT$ is the set of tuples
	$(u,v) \in \Gamma^* \times \Sigma^*$ such that $q_0 \autstepps{v|u}
	q_f$ for some $q_f \in F$. Given a language $L \subseteq \Sigma^*$,
	we define
	$TL:=\{ u \in \Gamma^* \; | \; \exists v \in L \; (u,v) \in T\}$.
        A transduction $T\subseteq\Gamma^*\times\Sigma^*$ is \emph{rational} if
        it is generated by some asynchronous transducer.
\end{defn}

A \emph{language} is a subset of $\Sigma^*$ for some alphabet
$\Sigma$.  A \emph{language class} is a collection of languages,
together with some way to finitely represent these languages, for
example using machine models or grammars. We call a language class a
\emph{full trio} if it is effectively closed under rational
transductions.  This means, given a representation of $L$ in $\calC$
and an asynchronous transducer for
$T\subseteq \Sigma^*\times\Gamma^*$, the language $TL$ belongs to
$\calC$ and one can compute a representation of $TL$ in $\calC$.

The following equivalent definition of full trios is well known 
(see Berstel \cite{berstel2013transductions}):
\begin{lem}
	\label{alt_fulltrio}
	A language class is closed under rational transductions if and
        only if it is closed under (i)~homomorphic image, (ii)~inverse
        homomorphic image, and (iii)~intersection with regular
        languages.
\end{lem}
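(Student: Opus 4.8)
The plan is to prove the two implications separately by direct constructions on asynchronous transducers, keeping all constructions effective.

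For the direction ``closed under rational transductions implies closed under (i)--(iii)'', I would observe that each of the three operations is itself the application of a suitable rational transduction. Given a language $L\subseteq\Sigma^*$ of the class: a homomorphism $h\colon\Sigma^*\to\Gamma^*$ is realised by the one-state transducer (with $q_0$ final) that has an edge $q_0\xrightarrow{a|h(a)}q_0$ for each $a\in\Sigma$, which generates $T=\{(h(v),v)\mid v\in\Sigma^*\}$, so that $TL=h(L)$; an inverse homomorphism under $h\colon\Gamma^*\to\Sigma^*$ is realised by the ``mirrored'' one-state transducer with edges $q_0\xrightarrow{h(a)|a}q_0$ for $a\in\Gamma$, which generates $T=\{(u,h(u))\mid u\in\Gamma^*\}$, so that $TL=h^{-1}(L)$; and intersection with a regular language $R\subseteq\Sigma^*$ is realised by taking a finite automaton for $R$ and replacing each transition $p\xrightarrow{a}q$ by an edge $p\xrightarrow{a|a}q$, which generates the partial identity $T=\{(w,w)\mid w\in R\}$, so that $TL=L\cap R$.

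For the converse, the main step is a Nivat-style decomposition of an arbitrary rational transduction through a free monoid. Let $L\in\calC$ with $L\subseteq\Sigma^*$, and let $T\subseteq\Gamma^*\times\Sigma^*$ be generated by a transducer $\calT=(Q,\Gamma,\Sigma,E,q_0,F)$. I would take the edge set $E$ itself as a fresh alphabet $\Delta$; let $R\subseteq\Delta^*$ be the language of all words $e_1\cdots e_n$ that spell out a path from $q_0$ to a state of $F$ in $\calT$ (this is regular, being recognised by the automaton with state set $Q$ in which a letter $e=(p,u,v,q)$ labels a transition from $p$ to $q$); and let $g\colon\Delta^*\to\Gamma^*$ and $h\colon\Delta^*\to\Sigma^*$ be the homomorphisms with $g(e)=u$ and $h(e)=v$ for $e=(p,u,v,q)\in E$. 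Unfolding the definitions shows that a pair $(x,y)$ lies in $T$ if and only if $x=g(w)$ and $y=h(w)$ for some $w\in R$, hence
\[
  TL \;=\; \{\, g(w) \mid w\in R,\ h(w)\in L \,\} \;=\; g\bigl(R\cap h^{-1}(L)\bigr).
\]
It then suffices to apply the closure properties in order: $h^{-1}(L)\in\calC$ by (ii), then $R\cap h^{-1}(L)\in\calC$ by (iii), and finally $TL=g(R\cap h^{-1}(L))\in\calC$ by (i); since each step is effective and $R,g,h$ are computable from $\calT$, so is the whole construction.

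I do not expect a genuine obstacle: this is the classical argument (see Berstel~\cite{berstel2013transductions}). The only points needing care are keeping track of the orientation of the transduction (which coordinate in $E$ plays the role of input and which of output) and observing that the homomorphisms $g$ and $h$ produced by the decomposition may be erasing, which is harmless since (i) and (ii) concern arbitrary homomorphisms.
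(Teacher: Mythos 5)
Your proof is correct and is exactly the classical Nivat-style argument that the paper implicitly invokes by citing Berstel (the paper itself gives no proof, only the reference). Both directions check out against the paper's conventions for transductions (output in the first coordinate, $TL$ projecting to outputs), including the decomposition $TL=g\bigl(R\cap h^{-1}(L)\bigr)$ and the remark that erasing homomorphisms are permitted.
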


We are interested in decision problems where the representation of a
language $L$ (or possibly multiple languages) is the input. In
particular, we study the following problems.
\begin{prob}[Intersection Emptiness]
	\label{prob:inter}
	Given two languages classes $\calC_1$ and $\calC_2$, the intersection
	emptiness
	problem $\Inter{\calC_1}{\calC_2}$ is defined as follows:
        \begin{description}
	\item[Input:] Languages $L_1 \in \calC_1$ and
	$L_2 \in \calC_2$.
      \item[Question:] Is $L_1 \cap L_2$ empty?
        \end{description}
\end{prob}

\begin{prob}[Regular Separability]
  \label{prob:reg_sep}
  Given two languages classes $\calC_1$ and $\calC_2$, the regular
  separability
  problem $\RegSep{\calC_1}{\calC_2}$ is defined as follows:
  \begin{description}
  \item[Input:] Languages $L_1 \in \calC_1$ and
    $L_2 \in \calC_2$.
  \item[Question:] Is there a regular language $R$ such that $L_1
    \subseteq R$ and $L_2 \cap R=\emptyset$?
  \end{description}
\end{prob}
We will write $\regsep{L}{K}$ to denote that $L$ and $K$ are regularly
separable. 

\begin{prob}[Emptiness]
	\label{prob:emptiness}
	The emptiness problem for a language class $\calC$, denoted $\Empty
	{\calC}$ is defined as:
        \begin{description}
	\item[Input:] A language $L \in \calC$.
        \item[Question:] Is $L=\emptyset$ i.e. is $L$ empty?
        \end{description}
\end{prob}

\begin{prob}[Infinity]
	\label{prob:infinity}
	The infinity problem for a language class $\calC$, denoted $\Inf
	{\calC}$ is defined as:
        \begin{description}
	\item[Input:] A language $L \in \calC$.
        \item[Question:] Does $L$ contain infinitely many elements?
        \end{description}
\end{prob}

\section{Incrementing automata}
The counterexamples we construct are defined using special kinds of
automata which can only increment a counter, which we will define
formally below. The acceptance condition requires that the counter
value satisfy a specific \textit{numerical predicate}, in addition to
reaching a final state. By a \emph{predicate class}, we mean a class
$\calP$ of predicates over natural numbers (i.e. subsets
$P\subseteq\N$) such that there is a way to finitely describe the
members of $\calP$.  As an example, if $\calC$ is a language class,
then a subset $S \subseteq \N$ is a \textit{pseudo-$\calC$ predicate}
if $S=\nu(L)$ for some $L \in \calC$. Now the class of all
pseudo-$\calC$ predicates constitutes a predicate class, because a
pseudo-$\calC$ predicate can be described using the finite description
of a language in $\calC$.  The class of all pseudo-$\calC$ predicates
is denoted $\pseudo{\calC}$.

\begin{defn}
  \label{defn:special_counter_machines} 
  Let $\calP$ be a predicate class.  An \emph{incrementing automata
    over $\calP$} is a four-tuple $\calM=(Q,\Sigma,E,q_0,F)$ where $Q$
  is a finite set of \emph{states}, $\Sigma$ is its \emph{input
    alphabet}, $E \subseteq Q \times \Sigma^* \times \{0,1\} \times Q$
  a finite set of \emph{edges}, $q_0\in Q$ an initial state and $F$ is
  a finite set of \emph{acceptance pairs} $(q,P)$ where $q \in Q$ is a
  state and $P$ belongs to $\calP$ .

  A \emph{configuration} of $\calM$ is a pair $(q,n) \in Q \times \N$.
  For two configurations $(q,n)$, $(q',n')$, we write
  $(q,n)\autstepp{w}(q',n')$ if there are configurations
  $(q_1,n_1),\ldots,(q_\ell,n_\ell)$ with $q_1=q$ and $q_\ell=q'$ and
  edges $(q_{i},w_i,m_i,q_{i+1})$ with $n_{i+1}=n_i+m_{i+1}$ for
  $1\le i<\ell$ and $w=w_1\cdots w_\ell$.  The \emph{language accepted
    by $\calM$} is
  \[ \langof{\calM}=\{w\in\Sigma^* \mid \text{$(q_0,0)\autstepp{w} (q,m)$ for some $(q,P)$ in $F$ with $m\in P$}\}. \]
  The collection of all languages accepted by incrementing automata over $\calP$
  is denoted $\cntlang{\calP}$. 
  \end{defn}

  It turns out that even with no further assumptions on the predicate
  class $\calP$, the language class $\cntlang{\calP}$ has some nice
  closure properties.
	\begin{lem}
          \label{lem:pseudoCtMc_rat_trans}
          Let $\calP$ be a predicate class. The languages of
          incrementing automata over $\calP$ are precisely the finite unions of
          languages of the form $T\ltr{a}^P$ where $P \in \calP$ and
          $T\subseteq\Sigma^*\times\{\ltr{a}\}^*$ is a rational
          transduction. In particular, the class of languages accepted
          by incrementing automata over $\calP$ is a full trio.
	\end{lem}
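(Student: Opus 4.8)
The plan is to establish the claimed characterization by proving the two inclusions separately and then to read off the full trio property. A preliminary observation that simplifies both directions is that $\cntlang{\calP}$ is effectively closed under finite union: given incrementing automata $\calM_1,\dots,\calM_k$ over $\calP$ with pairwise disjoint state sets, one adds a fresh initial state $q_0$ together with the edges $(q_0,\varepsilon,0,q_0^{(i)})$ pointing to the initial state $q_0^{(i)}$ of each $\calM_i$, and takes the union of all acceptance pairs. Since these new edges neither touch the counter nor read input and lead only into the $\calM_i$, every accepting run amounts to a choice of $i$ followed by an accepting run of $\calM_i$, so the resulting automaton accepts $\bigcup_i\langof{\calM_i}$. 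Consequently it suffices to handle a single language $T\ltr{a}^P$ in each direction.

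For the inclusion ``$\subseteq$'', fix an incrementing automaton $\calM=(Q,\Sigma,E,q_0,F)$. Since $\langof{\calM}$ is the union, over all acceptance pairs $(q,P)\in F$, of the sets $L_{q,P}=\{w \mid (q_0,0)\autstepp{w}(q,m)\text{ for some }m\in P\}$, it is enough to write each $L_{q,P}$ as $T_{q,P}\ltr{a}^P$ for a rational transduction $T_{q,P}\subseteq\Sigma^*\times\{\ltr{a}\}^*$. I would take the asynchronous transducer with state set $Q$, input alphabet $\{\ltr{a}\}$, output alphabet $\Sigma$, initial state $q_0$, the single final state $q$, and, for every edge $(p,v,m,p')$ of $\calM$, the transducer edge that reads $\ltr{a}^m$ and outputs $v$ on the way from $p$ to $p'$. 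A straightforward induction on the length of the run shows that its computations $q_0\autstepps{\ltr{a}^n|w}q$ correspond exactly to the runs $(q_0,0)\autstepp{w}(q,n)$ of $\calM$ — the final counter value equals the total increment, which equals the number of $\ltr{a}$'s consumed — so the generated transduction $T_{q,P}$ satisfies $T_{q,P}\ltr{a}^P=L_{q,P}$, as desired.

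For the inclusion ``$\supseteq$'', let $T$ be generated by an asynchronous transducer $\calT=(Q,\Sigma,\{\ltr{a}\},E,q_0,F_\calT)$ with output alphabet $\Sigma$ and input alphabet $\{\ltr{a}\}$, and let $P\in\calP$. I would build an incrementing automaton over $\calP$ that simulates $\calT$ and stores in its counter the number of $\ltr{a}$'s that $\calT$ has read. The one point that needs care is that a transducer edge may read a block $\ltr{a}^j$ with $j>1$, whereas a single incrementing edge adds at most $1$ to the counter; so for a transducer edge from $p$ to $p'$ that reads $\ltr{a}^j$ and outputs $u$ I would insert, if $j=0$, the single edge $(p,u,0,p')$, and if $j\ge 1$, the chain of edges $(p,u,1,r_1),(r_1,\varepsilon,1,r_2),\dots,(r_{j-1},\varepsilon,1,p')$ through $j-1$ fresh states (legitimate since $E$ is finite, so $j$ is bounded), which altogether reads $u$ and increments $j$ times. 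Keeping $q_0$ as initial state and declaring $\{(q_f,P)\mid q_f\in F_\calT\}$ as the acceptance pairs, the accepting runs of this automaton correspond exactly to computations $q_0\autstepps{\ltr{a}^n|w}q_f$ of $\calT$ with $n\in P$, so the automaton accepts $\{w\mid\exists n\in P:(w,\ltr{a}^n)\in T\}=T\ltr{a}^P$. Together with closure under finite union, this shows that every finite union of languages $T\ltr{a}^P$ lies in $\cntlang{\calP}$, completing the characterization.

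The full trio property then drops out: if $L=\bigcup_i T_i\ltr{a}^{P_i}\in\cntlang{\calP}$ and $S\subseteq\Gamma^*\times\Sigma^*$ is a rational transduction, then $SL=\bigcup_i S(T_i\ltr{a}^{P_i})=\bigcup_i(S\circ T_i)\ltr{a}^{P_i}$, and since rational transductions are effectively closed under composition, $SL$ is again a finite union of languages of the required form, hence in $\cntlang{\calP}$; chaining the constructions above yields an incrementing automaton for $SL$ effectively. I do not expect a genuine obstacle here; the only delicate bookkeeping is the exact matching of the counter value with the number of $\ltr{a}$'s — in particular the treatment of $\varepsilon$-labelled transitions on both sides and the ``one block of $\ltr{a}$'s per transducer edge'' versus ``increment by at most one per automaton edge'' mismatch, which is precisely what the chains of fresh states take care of.
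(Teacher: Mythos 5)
Your proposal is correct and follows essentially the same route as the paper: the same transducer construction per acceptance pair for one inclusion, the same simulation of the transducer by an incrementing automaton (the paper normalizes the transducer so each edge reads $\ltr{a}^m$ with $m\in\{0,1\}$, where you instead split long blocks via fresh intermediate states, which amounts to the same thing), and the same composition-of-transductions argument for the full trio property.
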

	\begin{proof}
          For every accepting pair $(q,P)$ of $\cal M$, we construct a
          transducer $T_{q,P}$, which has the same set of states as
          $\cal M$, accepting state set $\{q\}$ and for each edge
          $ (q',w,m,q'')$ of $\calM$ the transducer reads $\ltr{a}$ if
          $m=1$ or $\varepsilon$ if $m=0$ and outputs $w$. Then
          $\langof{\cal M}$ is the finite union of all
          $T_{q,P}(\ltr{a}^P)$.

          Conversely, since the languages accepted by incrementing automata over $\calP$ are clearly closed under union, it suffices to show
          that $T\ltr{a}^P$ is accepted by an incrementing
          automaton over $\calP$. We may assume that $T$ is given by a
          transducer in
          which every edge is of the form $(q,w,\ltr{a}^m,q')$ with
          $m\in\{0,1\}$. Let $\calM$ have the same state set as $T$
          and turn every edge $(q,w,\ltr{a}^m,q')$ into an edge
          $(q,w,m,q')$ for $\calM$. Finally, for every final state
          $q$ of $T$, we give $\calM$ an accepting pair $(q,P)$. Then
          clearly $\langof{\calM}=T\ltr{a}^P$.

          This implies that the class of incrementing automata over $\calP$ is a
          full trio: If $L\subseteq\Sigma^*$ is accepted by a
          incrementing automata over $\calP$, then we can write
          $L=T_1\ltr{a}^{P_1}\cup\cdots\cup T_\ell\ltr{a}^{P_\ell}$
          with $T_1,\ldots,T_\ell\subseteq\Sigma^*\times\ltr{a}^*$. If
          $T\subseteq\Gamma^*\times\Sigma^*$ is a rational
          transduction, then
          $TL=(TT_1)\ltr{a}^{P_1}\cup\cdots\cup
          (TT_\ell)\ltr{a}^{P_\ell}$ and since $TT_i$ is again a
          rational transduction for $1\le i\le \ell$, the lanuage $TL$
          is accepted by some incrementing automata over $\calP$.
	\end{proof}

        It is obvious that the class $\cntlang{\calP}$ does not always
        have a decidable emptiness problem: Emptiness is decidable for
        $\cntlang{\calP}$ if and only if it is decidable
        whether a given predicate from $\calP$ intersects a given
        arithmetic progression, i.e. given $P$ and $m,n\in\N$, whether
        $(m+n\N)\cap P\ne\emptyset$. For all the predicate classes
        $\calP$ we consider, emptiness for $\cntlang{\calP}$ will
        always be decidable.

\section{Decidable Intersection and Undecidable Regular Separability}
\label{sec:decidable_intersection_and_undecidable_regular_separability}

In this section, we present a language class $\calC$ so that
intersection emptiness problem $\Inter{\calC}{\calC}$ is decidable for
$\calC$, but the regular
separability problem $\RegSep{\calC}{\calC}$ is undecidable for
$\calC$.
The definition of $\calC$
is based on reset vector addition systems.


\subparagraph*{Reset Vector Addition Systems} 
\label{reset-vass}
%
%

%
  A \emph{reset vector addition system} (reset VASS) is a tuple
  $\calV =(Q,\Sigma,n,E,q_0,F)$, where $Q$ is a finite set of \emph{states},
  $\Sigma$ is its \emph{input alphabet}, $n\in\N$ is its number of
  counters,
  $E\subseteq Q\times\Sigma^*\times \{1,\ldots,n\}\times\{0,1,-1,\reset\}\times Q$ is a
  finite set of \emph{edges}, $q_0\in Q$ is its \emph{initial state},
  and $F\subseteq Q$ is its set of \emph{final states}.  A
  \emph{configuration} if $\calV$ is a tuple $(q,m_1,\ldots,m_n)$
  where $q\in Q$ and $m_1,\ldots,m_n\in\N$. We write $(q,m_1,\ldots,m_n)\autstepp{w}(q',m'_1,\ldots,m'_n)$ if there is an edge $(q,w,k,x,q')$ such that for every $j\ne k$, we have $m'_{j}=m_{j}$ and
    \begin{itemize}
    \item if $x\in\{-1,0,1\}$, then $m'_k=m_k+x$,
    \item if $x=\reset$, then $m'_k=0$.
    \end{itemize}
  If there are configurations $c_1,\ldots,c_\ell$ and words
  $w_1,\ldots,w_{\ell-1}$ with $c_i\autstepp{w_i}c_{i+1}$ for
  $1\le i<\ell$, and $w=w_1\cdots w_\ell$, then we also
  write $c_1\autstepp{w}c_\ell$. The 
  \emph{language accepted by $\calV$} is defined as
  \[ \langof{\calV} = \{w\in\Sigma^* \mid \text{$(q_0,0,\ldots,0)\autstepp{w}(q,m_1,\ldots,m_n)$ for some $q\in F$ and $m_1,\ldots,m_n\in\N$}\}. \]
  The class of languages accepted by reset VASS is denoted $\calR$.

 Our language class will be $\cntlang{\pseudo{\calR}}$,
 i.e. incrementing automata with access to predicates of the form
 $\nu(L)$ where $L$ is the language of a reset VASS.
 \begin{thm}\label{undecidable-sep}
   $\RegSep{\cntlang{\pseudo{\calR}}}{\cntlang{\pseudo{\calR}}}$ is
     undecidable and
     $\Inter{\cntlang{\pseudo{\calR}}}{\cntlang{\pseudo{\calR}}}$ is
     decidable.
 \end{thm}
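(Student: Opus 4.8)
For the undecidability of regular separability, I would reduce from a known undecidable problem about reset VASS. The natural candidate is the following: given two reset VASS $\calV_1,\calV_2$, decide whether $\langof{\calV_1}$ and $\langof{\calV_2}$ are regularly separable — or more directly, reduce from a problem whose undecidability is already known for a related model and transfer it through the binary encoding $\nu$. The key idea is that $\cntlang{\pseudo{\calR}}$ can simulate reset VASS languages "at the exponential scale": a predicate $P = \nu(L)$ for a reset VASS language $L \subseteq \{\ltr{0},\ltr{1}\}^*$ lets an incrementing automaton accept exactly when its (unary) counter value lies in $\nu(L)$. Since the reachability/coverability structure of reset VASS is powerful enough to encode undecidable problems (reset VASS have undecidable reachability, and boundedness/place-boundedness issues), and since regular separability of reset VASS languages is itself known to be undecidable (indeed it is undecidable already for one-counter languages~\cite{DBLP:conf/lics/CzerwinskiL17}, which are subsumed), I would show that from an instance of an undecidable separability problem one can build languages $K, L \in \cntlang{\pseudo{\calR}}$ such that $K$ and $L$ are regularly separable iff the original instance is positive. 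Concretely: given reset VASS languages $M_1, M_2$ over $\{\ltr{0},\ltr{1}\}^*$ whose regular separability is undecidable, set $K = \ltr{a}^{\nu(M_1)}$ and $L = \ltr{a}^{\nu(M_2)}$ — both are in $\cntlang{\pseudo{\calR}}$ by taking the trivial transduction $T = \{(\ltr{a}^n,\ltr{a}^n)\}$ in the characterization of Lemma~\ref{lem:pseudoCtMc_rat_trans}. Then $\regsep{K}{L}$ holds iff $\nu(M_1)$ and $\nu(M_2)$ are separated by a regular subset of $\ltr{a}^*$, i.e. by an eventually-periodic subset of $\N$; one must argue that this coincides with (or is interreducible with) regular separability of $M_1,M_2$ themselves, which is where the encoding $\nu$ and a careful choice of the source problem matter. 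I expect this transfer step — ensuring that regular separability of the unary "exponential-image" languages is genuinely undecidable — to be the main obstacle, and it likely requires choosing the reset VASS source problem so that the separating set, if it exists, can be taken eventually periodic.

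For the decidability of intersection emptiness, the situation is much cleaner. By Lemma~\ref{lem:pseudoCtMc_rat_trans}, any $L \in \cntlang{\pseudo{\calR}}$ is a finite union $L = T_1\ltr{a}^{P_1} \cup \cdots \cup T_\ell\ltr{a}^{P_\ell}$ with each $T_i$ a rational transduction and each $P_i = \nu(R_i)$ for a reset VASS language $R_i$. Given $K = \bigcup_i S_i\ltr{a}^{Q_i}$ and $L = \bigcup_j T_j\ltr{a}^{P_j}$, emptiness of $K \cap L$ reduces to finitely many questions of the form: is $(S_i\ltr{a}^{Q_i}) \cap (T_j\ltr{a}^{P_j}) = \emptyset$? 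Each such language $S_i\ltr{a}^{Q_i}$ is itself in $\cntlang{\pseudo{\calR}}$, and its intersection with another is again in $\cntlang{\pseudo{\calR}}$ (full trios are closed under intersection with... well, not arbitrary intersection, so I would instead argue as follows): the intersection $(S_i\ltr{a}^{Q_i}) \cap (T_j\ltr{a}^{P_j})$ is nonempty iff there exist $m \in Q_i$ and $n \in P_j$ and a word $w$ with $(w,\ltr{a}^m) \in S_i$ and $(w,\ltr{a}^n)\in T_j$; taking the product transducer of $S_i$ and $T_j$ one reduces this to: does the reset VASS product language / synchronized reachability relation realize a pair $(m,n)$ with $m \in \nu(R_i^{(Q)})$ and $n \in \nu(R_j^{(P)})$ simultaneously — and crucially, one packages the two predicates and the product transducer into a single emptiness question about a reset VASS (or a product thereof). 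Since reset VASS have decidable reachability — no wait, reset VASS reachability is undecidable; I must instead observe that the relevant question is coverability/reachability into a final-state set, i.e. plain language emptiness for reset VASS, which \emph{is} decidable. So the plan is: reduce $\Inter{\cntlang{\pseudo{\calR}}}{\cntlang{\pseudo{\calR}}}$ to emptiness of reset VASS languages, which is decidable (reset VASS, being WSTS, have decidable coverability and hence decidable language emptiness). The bookkeeping of folding two $\nu$-encoded predicates and a rational transduction into one reset VASS instance — in particular simulating the binary-to-unary decoding inside a reset VASS — is the step that needs care, but it is mechanical rather than deep.

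In summary, I would structure the proof as two lemmas: Lemma~A (undecidability of $\RegSep$) by reduction from an undecidable regular-separability instance for reset VASS (or one-counter) languages, encoded via $\nu$ into unary languages of the form $\ltr{a}^{\nu(M)}$ and invoking Lemma~\ref{lem:pseudoCtMc_rat_trans}; and Lemma~B (decidability of $\Inter$) by reducing an intersection-emptiness query, again via Lemma~\ref{lem:pseudoCtMc_rat_trans}, to finitely many emptiness queries for reset VASS languages, which are decidable since reset VASS form well-structured transition systems. The delicate point in Lemma~A is guaranteeing that eventual periodicity of the separator is no loss of generality; the delicate point in Lemma~B is the explicit reset-VASS construction that decodes binary representations, but neither is conceptually hard once the full-trio normal form is in hand.
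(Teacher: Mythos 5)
Your plan for the undecidability half rests on a false premise and would not go through. You propose to reduce from regular separability of reset VASS languages (or of one-counter languages, which you claim $\calR$ subsumes). But regular separability \emph{is decidable} for $\calR$: reset VASS are well-structured transition systems, and for WSTS languages regular separability coincides with disjointness and is decidable~\cite{DBLP:conf/concur/CzerwinskiLMMKS18} --- the paper points this out immediately after stating the theorem, precisely to explain why $\calR$ itself could not serve as the counterexample. Relatedly, the one-counter languages of~\cite{DBLP:conf/lics/CzerwinskiL17} use zero tests and are \emph{not} subsumed by $\calR$ (if they were, the previous sentence would be contradicted). So there is no undecidable separability instance to transfer through $\nu$, and the ``transfer step'' you yourself flag as the main obstacle is where the approach founders. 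The paper reduces from a different problem altogether: the \emph{infinity} problem for $\calR$, undecidable via boundedness of reset VASS (\cref{lem:infLCMundec}). From $L\in\calR$ one builds $K=\{\ltr{1}\ltr{0}^{|w|}\mid w\in L\}$, so that $\nu(K)\subseteq 2^{\N}$, and pairs $K_1=\ltr{a}^{\nu(K)}$ against the fixed language $K_2=\ltr{a}^{\N\setminus 2^{\N}}$. The two are disjoint by construction, and since every infinite regular subset of $\ltr{a}^*$ contains an arithmetic progression while the gaps between consecutive powers of two grow unboundedly, no infinite regular separator exists; hence $\regsep{K_1}{K_2}$ iff $K_1$ is finite iff $L$ is finite (\cref{lem:unary_reg_sep_pow2}). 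That sparsification-into-powers-of-two idea is the missing ingredient, and it is the conceptual heart of this half.

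For the decidability half you aim at the right target (finitely many emptiness checks for reset VASS languages), but the decisive mechanism is absent and one of your stated steps would fail. The difficulty is a mismatch of scales: the rational transductions act on unary words $\ltr{a}^m$, whereas the predicates constrain the \emph{binary} representation of $m$. ``Simulating the binary-to-unary decoding inside a reset VASS'' is not achievable: decoding $\nu(w)$ requires exact doubling of a counter, and a reset VASS cannot double exactly without zero tests (it can only double lossily). The paper bridges the gap in the opposite direction: by Parikh's theorem the relation $\{(m,n)\mid (\ltr{a}^m,\ltr{a}^n)\in T\}$ is semilinear; semilinear relations (in particular the graph of addition) are recognizable by finite automata on reversed binary convolutions; and $\calR$ is closed under intersection and rational transductions, so the class of pseudo-$\calR$ relations is closed under the positive existential operations. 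This is packaged as decidability of the truth problem for $\Sigma_1^+(\N,+,\le,1,\pseudo{\calR})$ (\cref{lem:posExist_presArithLCM_truth_dec}), after which intersection non-emptiness is a single sentence of that logic and everything does reduce to reset VASS emptiness as you intend. Without the Parikh-plus-automaticity step, your ``mechanical bookkeeping'' has no way to synchronize the unary semilinear constraint with the binary-encoded predicates.
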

 Note that $\cntlang{\pseudo{\calR}}$ is a full trio
 (\cref{lem:pseudoCtMc_rat_trans}) and since intersection is
 decidable, in particular its emptiness problem is decidable: One has
 $L\cap\Sigma^*=\emptyset$ if and only if $L=\emptyset$. Moreover,
 note that we could not have chosen $\calR$ as our example class:
 Since reset VASS are well-structured transition systems, regular
 separability is decidable for
 them~\cite{DBLP:conf/concur/CzerwinskiLMMKS18}.

 Before we begin with the proof of \cref{undecidable-sep}, let us
 mention that instead of $\calR$, we could have chosen any language
 class $\calD$, for which
 (i)~$\calD$ is closed under rational transductions,
 (ii)~$\calD$ is closed under intersection,
 (iii)~$\Empty{\calD}$ is decidable and
 (iv)~$\Inf{\calD}$ is undecidable.
 For example, we could have also used lossy channel systems instead of
 reset VASS. 
 
We now recall some results regarding $\calR$ from literature. 
\begin{lem}
	\label{lem:emptyLCMdec}
	Emptiness is decidable for $\calR$.
\end{lem}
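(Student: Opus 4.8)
The plan is to reduce emptiness for $\calR$ to the coverability problem of a well-structured transition system, which is known to be decidable. Let $\calV=(Q,\Sigma,n,E,q_0,F)$ be a reset VASS. Since acceptance imposes no condition on the final counter contents, $\langof{\calV}\ne\emptyset$ holds if and only if some final state $q\in F$ is reachable from the initial configuration $(q_0,0,\ldots,0)$ by a finite sequence of transitions; the input labels are irrelevant for emptiness and can be ignored. Thus $\Empty{\calR}$ is precisely the control-state reachability problem for reset VASS.

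Next I would invoke the theory of well-structured transition systems. Order the configuration space $Q\times\N^n$ by $(q,\mathbf m)\preceq(q',\mathbf m')$ iff $q=q'$ and $\mathbf m\le\mathbf m'$ componentwise; by Dickson's lemma this is a well-quasi-order. One checks that the transition relation is (strongly) monotone with respect to $\preceq$: if $c\preceq d$ and $c$ admits a transition along an edge $(q,w,k,x,q')$, then the same edge is applicable at $d$ (a decrement enabled at $c$ is enabled at any $\preceq$-larger configuration), and the resulting configurations $c',d'$ satisfy $c'\preceq d'$, because increments and decrements preserve the ordering and a reset sets coordinate $k$ to $0$ in both. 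Hence $\calV$ is a well-structured transition system with decidable $\preceq$, and the sets of minimal predecessors of upward-closed sets are computable (for reset edges the predecessor computation differs slightly from ordinary VASS but is still effective).

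Finally, control-state reachability reduces to coverability: a state $q\in F$ is reachable from $(q_0,0,\ldots,0)$ if and only if the configuration $(q,0,\ldots,0)$ is coverable from it. Coverability for well-structured transition systems is decidable via the backward reachability algorithm, which terminates because the underlying order is a well-quasi-order. This yields decidability of $\Empty{\calR}$; equivalently, one may simply cite that coverability is decidable for reset Petri nets. The point requiring the most care here is not any individual verification — the monotonicity checks are routine — but the observation that one must argue via coverability rather than reachability: reachability of a designated configuration is undecidable for reset VASS, so it is essential that emptiness of $\langof{\calV}$ depends only on which control states are reachable, not on the exact counter values attained.
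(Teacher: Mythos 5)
Your proof is correct and follows essentially the same route as the paper: reduce emptiness to coverability of a configuration $(q,0,\ldots,0)$ for $q\in F$, observe that reset VASS are well-structured transition systems, and invoke decidability of coverability for WSTS. The paper states this in one sentence with citations, whereas you spell out the monotonicity checks and the (worthwhile) caveat that one must go through coverability rather than reachability, which is undecidable for reset VASS.
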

The \lcnamecref{lem:emptyLCMdec} follows from the fact that reset VASS
are well-structured transition
systems~\cite{DBLP:conf/icalp/DufourdFS98}, for which the coverability
problem is
decidable~\cite{abdulla1996general,DBLP:conf/latin/FinkelS98} and the
fact that a reset VASS has a non-empty language if and only if a particular
configuration is coverable.

The following can be shown using standard product constructions,
please see \cref{proof:reset_clo}.
\begin{lem}
	\label{thm:LCM_closure_rat_trans}
	$\calR$ is closed under rational transductions, union, and intersection.
\end{lem}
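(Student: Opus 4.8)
The plan is to establish closure under union and intersection by direct product-style constructions on reset VASS, and then to obtain closure under rational transductions by verifying the three operations that characterize full trios in \cref{alt_fulltrio}: homomorphic image, inverse homomorphic image, and intersection with regular languages. Throughout, the one structural feature to keep in mind is that every edge of a reset VASS reads a word but modifies only a single counter; so whenever a construction must perform two counter updates ``at once'', I will serialize them across an auxiliary intermediate state joined by an $\varepsilon$-labeled edge. A convenient normal form, obtained by splitting each multi-letter edge into a chain through fresh intermediate states, lets me assume that every edge reads either a single letter or $\varepsilon$. Resets play no special role in any of these constructions: acceptance ignores the final counter values, and a reset is copied verbatim, exactly like the increments and the zero operation.

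For union, given reset VASS $\calV_1$ and $\calV_2$ with $n_1$ and $n_2$ counters, I take the disjoint union of their state sets together with a fresh initial state, equip the machine with $n_1+n_2$ counters split into two disjoint blocks (so the inactive block stays at $0$), and add $\varepsilon$-edges from the new initial state to both original initial states; the final states are the union of the two final-state sets. Since only one branch is ever entered and acceptance depends only on reaching a final state, the accepted language is $\langof{\calV_1}\cup\langof{\calV_2}$. For intersection I use a product: after putting both machines in the normal form above, the state set is $Q_1\times Q_2$ together with intermediate states, again with $n_1+n_2$ counters in two disjoint blocks. A synchronized letter move pairs an edge of $\calV_1$ reading $a$ with an edge of $\calV_2$ reading $a$, realized as the first-block counter update on the $a$-labeled edge into an intermediate state, followed by the second-block update on an $\varepsilon$-edge; an $\varepsilon$-move of either machine is simulated independently, leaving the other component unchanged. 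With initial state $(q_0^1,q_0^2)$ and final states $F_1\times F_2$, the product accepts exactly $\langof{\calV_1}\cap\langof{\calV_2}$. In particular, since every regular language is accepted by a reset VASS whose single counter is never touched, intersection with a regular language is the special case where one factor is regular.

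It remains to handle homomorphic and inverse homomorphic images. The homomorphic image under $h\colon\Sigma^*\to\Gamma^*$ is immediate: replacing each edge label $w$ by $h(w)$ (edges may read arbitrary words) turns every run reading $u$ into a run reading $h(u)$ with identical state and counter behavior, so the new machine accepts $h(\langof{\calV})$. For the inverse image $h^{-1}(\langof{\calV})$ I build a machine over $\Sigma$ that reads its input one letter at a time and internally feeds the corresponding image into a simulation of the normalized $\calV$. Its states are pairs $(q,s)$ with $q$ a state of $\calV$ and $s$ a suffix of some $h(a)$ (finitely many such suffixes), where $s$ records the part of the current image still to be consumed: reading a letter $a$ is allowed only from a state $(q,\varepsilon)$ and moves to $(q,h(a))$, consuming a pending first letter $b$ matches a single-letter edge of $\calV$ on $b$ (copying its counter update) while reading $\varepsilon$, and $\varepsilon$-edges of $\calV$ are simulated at any pending suffix. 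With initial state $(q_0,\varepsilon)$ and final states $(q,\varepsilon)$ for $q\in F$, a run on $u$ corresponds precisely to a run of $\calV$ on $h(u)$ with the same counter sequence, so the machine accepts $h^{-1}(\langof{\calV})$.

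The constructions are all routine, so I expect no serious obstacle; the only points requiring genuine care—and hence the main (minor) difficulty—are the serialization of simultaneous counter updates through intermediate $\varepsilon$-edges in the product, and the bookkeeping of partially consumed images via the suffix component in the inverse-homomorphism gadget. Having verified closure under homomorphic image, inverse homomorphic image, and intersection with regular languages, \cref{alt_fulltrio} yields closure under rational transductions, which together with the union and intersection constructions completes the proof of \cref{thm:LCM_closure_rat_trans}.
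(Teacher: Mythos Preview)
Your proof is correct. For union and intersection you do essentially what the paper does (a disjoint-union gadget and a product with $n_1+n_2$ counters; the paper avoids your intermediate states by first normalizing so that an edge either reads a letter with no counter effect or reads $\varepsilon$ with a counter operation, which makes synchronized moves single-step). The one genuine difference is how you obtain closure under rational transductions: the paper builds a single product of the reset VASS with the transducer $\calT$, with three edge types (both advance on a matched $\calV$-output/$\calT$-input letter; only $\calT$ advances producing output; only $\calV$ advances on $\varepsilon$), whereas you go through \cref{alt_fulltrio} and verify homomorphic image, inverse homomorphic image, and intersection with regular separately. Your route is slightly longer but entirely modular and reuses the intersection construction; the paper's direct product is shorter and avoids the suffix-bookkeeping gadget for inverse homomorphisms. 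Either way the argument is standard.
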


We now show that regular separability is undecidable for
$\cntlang{\pseudo{\calR}}$. We do this using a reduction from the
infinity problem for $\calR$, whose undecidability is an easy
consequence from the undecidability of boundedness of reset VASS.

The boundedness problem for reset VASS is defined below and was shown
to be undecidable by Dufourd, Finkel, and
Schnoebelen~\cite{DBLP:conf/icalp/DufourdFS98} (and a simple and more
general proof was given by Mayr~\cite{mayr2003undecidable}).  A
configuration $(q,x_1,\ldots,x_n)$ is \emph{reachable} if there is a
$w\in\Sigma^*$ with
$(q_0,0,\ldots,0)\autstepp{w}(q,x_1,\ldots,w_n)$. A reset VASS $\calV$
is called \emph{bounded} if there is a $B \in \N$ such that for every
reachable $(q,x_1,\ldots,x_n)$, we have $x_1+\cdots+x_n\le B$. Hence,
the \emph{boundedness problem} is the following.
        \begin{description}
	\item[Input:] A reset VASS $\calV$.
	\item[Question:] Is $\calV$ bounded?
          \end{description}

\begin{lem}
\label{lem:infLCMundec}
	The infinity problem for $\calR$ is undecidable.
\end{lem}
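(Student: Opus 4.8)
The plan is to reduce the boundedness problem for reset VASS --- shown undecidable by Dufourd, Finkel, and Schnoebelen --- to $\Inf{\calR}$. Given a reset VASS $\calV = (Q, \Sigma, n, E, q_0, F)$, I would build a reset VASS $\calV'$ over the single-letter alphabet $\{\ltr{a}\}$ that first silently simulates a run of $\calV$ to some reachable configuration and then reads one $\ltr{a}$ for every unit it chooses to remove from the counters. Concretely, $\calV'$ has states $Q \cup \{d\}$ for a fresh \emph{drain} state $d$, the same $n$ counters, initial state $q_0$, and final-state set $\{d\}$; its edges are (i)~$(q, \varepsilon, k, x, q')$ for every edge $(q, w, k, x, q')$ of $\calV$, (ii)~$(q, \varepsilon, 1, 0, d)$ for every $q \in Q$, and (iii)~$(d, \ltr{a}, i, -1, d)$ for every counter $i \in \{1, \ldots, n\}$.

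Because the edges in (i) perform exactly the counter updates of the corresponding edges of $\calV$, the configurations of $\calV'$ with state in $Q$ reachable from $(q_0, 0, \ldots, 0)$ are precisely the reachable configurations of $\calV$; note that (ii) lets $\calV'$ leave the simulation from \emph{any} such configuration, matching the definition of boundedness, which quantifies over all reachable configurations regardless of the state. In the drain phase, starting from $(q, x_1, \ldots, x_n)$, each application of a loop in (iii) decrements some counter and is blocked once that counter reaches $0$, so from this configuration $\calV'$ accepts exactly the words $\ltr{a}^k$ with $0 \le k \le x_1 + \cdots + x_n$, and every such $k$ is attainable. Hence
\[
  \langof{\calV'} = \{\ltr{a}^k \mid k \le x_1 + \cdots + x_n \text{ for some reachable configuration } (q, x_1, \ldots, x_n) \text{ of } \calV\}.
\]

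Finally I would check that $\langof{\calV'}$ is infinite if and only if $\calV$ is unbounded: if $\calV$ is unbounded then for every $B \in \N$ some reachable configuration has counter sum $\ge B$, so $\ltr{a}^B \in \langof{\calV'}$; and if every reachable configuration has counter sum at most $B$, then every word of $\langof{\calV'}$ has length at most $B$. As $\calV'$ is clearly computable from $\calV$, this reduces boundedness of reset VASS to $\Inf{\calR}$, giving undecidability. I do not anticipate a genuine difficulty in this argument; the only load-bearing external fact is precisely the cited undecidability of reset-VASS boundedness (equivalently, of finiteness of the reachability set), and the one point needing a moment's care is that $\calV'$ must offer the drain transition from every state of $\calV$, not just from its final states, so that the construction really captures the boundedness condition.
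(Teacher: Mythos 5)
Your proposal is correct and matches the paper's proof essentially verbatim: the paper likewise reduces from the boundedness problem of Dufourd, Finkel, and Schnoebelen by erasing the input words, adding a fresh drain state reachable silently from every state of $\calV$, and reading one $\ltr{a}$ per counter decrement there. The construction, the characterization of $\langof{\calV'}$, and the final equivalence with unboundedness are all the same.
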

\begin{proof}
  From an input reset VASS $\calV=(Q,\Sigma,n,E, q_0, F)$ ,
  we construct a reset VASS $\calV'$ over the alphabet
  $\Sigma'=\{ \ltr{a} \}$ as follows.  In every edge of $\calV$, we
  replace the input word by the empty word $\varepsilon$.  Moreover,
  we add a fresh state $s$, which is the only final state of $\calV'$.
  Then, we add an edge $(q,\varepsilon,1,0,s)$ for every state $q$ of
  $\calV$.  Finally, we add a loop $(s,\ltr{a},i,-1,s)$ for every
  $i\in\{1,\ldots,n\}$.
  This means $\calV'$ simulates a computation of $\calV$ (but
  disregarding the input) and can spontaneously jump into the state
  $s$, from where it can decrement counters.  Each time it decrements
  a counter in $s$, it reads an $\ltr{a}$ from the input.  Thus,
  clearly, $\langof{\calV'}\subseteq\ltr{a}^*$. Moreover, we have
  $\ltr{a}^m\in\langof{\calV'}$ if and only if there is a reachable
  configuration $(q,x_1,\ldots,x_n)$ of $\calV$ with
  $x_1+\cdots+x_n\ge m$. Thus, $\langof{\calV'}$ is finite if and only
  if $\calV$ is bounded.
\end{proof}
Note that infinity is already undecidable for languages that are
subsets of $\ltr{1}\ltr{0}^*$. This is because given $L$ from
$\calR$, a rational transduction yields
$L'=\{\ltr{1}\ltr{0}^{|w|} \mid w\in L\}$ and $L'$ is infinite
if and only if $L$ is.

Our reduction from the infinity problem works because the input
languages have a particular shape, for which regular separability has
a simple characterization.
\begin{lem}
\label{lem:unary_reg_sep_pow2}
	Let $S_0,S_1 \subseteq \N$ and $\N \setminus 2^{\N} \subseteq
	S_1$. Then $\ltr{a}^{S_0}$ and $\ltr{a}^{S_1}$ are regularly separable if and
	only if $S_0$ is finite and disjoint from $S_1$.
\end{lem}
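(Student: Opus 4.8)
The plan is to prove the equivalence by treating the two directions separately, with the backward direction being immediate and the forward direction reducing to a simple observation about ultimately periodic subsets of $\N$.

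For the backward direction: if $S_0$ is finite and $S_0\cap S_1=\emptyset$, then $\ltr{a}^{S_0}$ is a finite, hence regular, language, and it is disjoint from $\ltr{a}^{S_1}$. Thus $R:=\ltr{a}^{S_0}$ is a regular language with $\ltr{a}^{S_0}\subseteq R$ and $R\cap\ltr{a}^{S_1}=\emptyset$, which witnesses regular separability.

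For the forward direction, suppose $R$ is a regular language with $\ltr{a}^{S_0}\subseteq R$ and $R\cap\ltr{a}^{S_1}=\emptyset$. Disjointness of $S_0$ and $S_1$ is immediate: any $n\in S_0\cap S_1$ would give $\ltr{a}^n\in R\cap\ltr{a}^{S_1}$. It remains to show that $S_0$ is finite. Replacing $R$ by $R\cap\ltr{a}^*$ (still regular, still containing $\ltr{a}^{S_0}$, still disjoint from $\ltr{a}^{S_1}\subseteq\ltr{a}^*$), I may assume $R=\ltr{a}^{U}$ for an ultimately periodic set $U\subseteq\N$, so there are a threshold $N$ and a period $p\ge 1$ such that, for $n\ge N$, membership of $n$ in $U$ depends only on $n\bmod p$. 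Assume for contradiction that $S_0$ is infinite. Since $S_0\subseteq U$, infinitely many elements of $S_0$ exceed $N$, so there is a residue $r$ with $0\le r<p$ such that every $n\ge N$ with $n\equiv r\pmod p$ lies in $U$. This infinite arithmetic progression contains some $m$ that is not a power of two; then $\ltr{a}^m\in R$, while $m\in\N\setminus 2^{\N}\subseteq S_1$ gives $\ltr{a}^m\in\ltr{a}^{S_1}$, contradicting $R\cap\ltr{a}^{S_1}=\emptyset$. Hence $S_0$ is finite.

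The only mildly technical point — and the closest thing to an obstacle — is the claim that every infinite arithmetic progression contains a non-power-of-two. This holds because consecutive powers of two eventually differ by more than $p$, whereas consecutive elements of the progression differ by exactly $p$; so a tail of the progression cannot be entirely contained in $2^{\N}$. I would state this as a one-line remark rather than a separate lemma. Everything else is bookkeeping about regular subsets of $\ltr{a}^*$ being exactly the sets $\ltr{a}^{U}$ with $U$ ultimately periodic.
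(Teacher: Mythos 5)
Your proof is correct and follows essentially the same route as the paper's: both reduce the forward direction to the fact that an infinite regular subset of $\ltr{a}^*$ contains an infinite arithmetic progression, which must contain a non-power-of-two and hence an element of $S_1$. Your version merely makes the ultimately-periodic bookkeeping and the "progressions escape $2^{\N}$" step more explicit.
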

\begin{proof}
  If $S_0$ is finite and disjoint from $S_1$, then clearly $\ltr{a}^{S_0}$ is a
  regular separator. For the only if direction, consider any infinite
  regular language $R \subseteq \ltr{a}^*$. It has to include an
  arithmetic progession, meaning that there exist $m,n\in\N$ with
  $\ltr{a}^{m + n\N} \subseteq R$. Hence, for sufficiently large $\ell$, the language
  $ \{ \ltr{a}^x\mid 2^\ell < x < 2^{\ell+1} \} \subseteq S_1$ must
  intersect with $R$. In other words, no infinite $R$ can be a regular
  separator of $\ltr{a}^{S_0}$ and $\ltr{a}^{S_1}$ i.e.  $S_0$ must be
  finite (and disjoint from $S_1$).
\end{proof}

\begin{lem}
	\label{thm:undec_reg_sep_pseudoLCM}
	Regular separability is undecidable for $\cntlang{\pseudo{\calR}}$.
\end{lem}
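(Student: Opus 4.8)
The plan is to reduce the infinity problem for $\calR$ (undecidable by \cref{lem:infLCMundec}) to the regular separability problem for $\cntlang{\pseudo{\calR}}$, using the characterization of \cref{lem:unary_reg_sep_pow2}. First I would note, using the remark after \cref{lem:infLCMundec}, that infinity is undecidable even for reset VASS languages $L$ that are subsets of $\ltr{1}\ltr{0}^*$; applying $\nu$, such a language encodes a set $S_0 = \nu(L) \subseteq 2^{\N}$, and $S_0$ is infinite if and only if $L$ is. So from a reset VASS I can effectively obtain a pseudo-$\calR$ predicate $S_0 \subseteq 2^{\N}$ with the property that deciding whether $S_0$ is finite is undecidable.

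Next I would set up the two input languages for the separability instance. The first is $\ltr{a}^{S_0}$, which lies in $\cntlang{\pseudo{\calR}}$: indeed $S_0$ is a pseudo-$\calR$ predicate, and by \cref{lem:pseudoCtMc_rat_trans} the language $\ltr{a}^{S_0}$ (the image of $\ltr{a}^{S_0}$ under the identity transduction) is accepted by an incrementing automaton over $\pseudo{\calR}$. The second is $\ltr{a}^{S_1}$ where $S_1 = \N \setminus 2^{\N}$; I must check this too is in $\cntlang{\pseudo{\calR}}$, which requires $S_1$ to be a pseudo-$\calR$ predicate, i.e. $S_1 = \nu(K)$ for some reset VASS language $K$. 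Since $S_1 = \nu(\{\ltr{0},\ltr{1}\}^* \setminus (\ltr{1}\ltr{0}^* \cup \{\ltr{0}\}))$ essentially — the binary words that are not powers of two — and this is a regular set of binary words, $S_1$ is even a pseudo-regular predicate, hence a pseudo-$\calR$ predicate since every regular language is a reset VASS language. So $\ltr{a}^{S_1} \in \cntlang{\pseudo{\calR}}$ as well, and both inputs are fixed up to the description of $S_0$.

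Now the reduction is immediate from \cref{lem:unary_reg_sep_pow2}: since $S_0 \subseteq 2^{\N}$ and $S_1 = \N \setminus 2^{\N}$, we have $\N \setminus 2^{\N} \subseteq S_1$ and $S_0 \cap S_1 = \emptyset$ automatically, so $\ltr{a}^{S_0}$ and $\ltr{a}^{S_1}$ are regularly separable if and only if $S_0$ is finite, which is if and only if the original reset VASS language is finite. As that question is undecidable, regular separability for $\cntlang{\pseudo{\calR}}$ is undecidable. The main obstacle — really the only subtlety — is verifying the exact membership claims: that the set of non-powers-of-two is realizable as $\nu(K)$ for a reset VASS (handled by the regular-languages-are-reset-VASS observation plus closure of $\calR$ under rational transductions from \cref{thm:LCM_closure_rat_trans}), and that one can massage an arbitrary reset VASS into one whose language sits inside $\ltr{1}\ltr{0}^*$ with the infinity property preserved (handled by the transduction $L \mapsto \{\ltr{1}\ltr{0}^{|w|} \mid w \in L\}$ already invoked in the remark, composed with $\nu$). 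Everything else is a direct appeal to the lemmas already proved.
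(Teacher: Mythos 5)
Your proposal is correct and follows essentially the same route as the paper: reduce from the infinity problem for $\calR$ via the transduction $L\mapsto\{\ltr{1}\ltr{0}^{|w|}\mid w\in L\}$, encode via $\nu$ into a subset of $2^{\N}$, pair with $\ltr{a}^{\N\setminus 2^{\N}}$, and apply \cref{lem:unary_reg_sep_pow2}. One small repair: your concrete formula $\nu(\{\ltr{0},\ltr{1}\}^*\setminus(\ltr{1}\ltr{0}^*\cup\{\ltr{0}\}))$ actually yields all of $\N$ because of leading zeros (e.g.\ $\nu(\ltr{0}\ltr{1}\ltr{0})=2$), which would destroy the required disjointness from $S_0$; you should restrict to canonical representations, e.g.\ the paper's choice $\nu(\ltr{1}\{\ltr{0},\ltr{1}\}^*\ltr{1}\{\ltr{0},\ltr{1}\}^*)$, which is still regular and hence in $\calR$.
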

\begin{proof}
  We reduce the infinity problem for $\calR$ (which is undecidable by
  \cref{lem:infLCMundec}) to regular separability in
  $\cntlang{\pseudo{\calR}}$. Suppose we are given $L$ from $\calR$.
  Since $\calR$ is effectively closed under rational transductions, we
  also have $K=\{\ltr{1}\ltr{0}^{|w|} \mid w\in L\}$ in $\calR$.  Note
  that $K$ is infinite if and only if $L$ is infinite.  Then
  $\nu(K)\subseteq 2^\N$ and $K_1:=\ltr{a}^{\nu(K)}$ belongs to
  $\cntlang{\pseudo{\calR}}$.  Let
  $K_2=\ltr{a}^{\N\setminus 2^{\N}}=\ltr{a}^{\nu(\ltr{1}
    \{\ltr{0},\ltr{1}\}^*\ltr{1}\{\ltr{0},\ltr{1}\}^*)}$, which also
  belongs to $\cntlang{\pseudo{\calR}}$, because
  $\ltr{1}\{\ltr{0},\ltr{1}\}^*\ltr{1}\{\ltr{0},\ltr{1}\}^*$ is
  regular and thus a member of $\calR$.
  
  By Lemma \ref{lem:infLCMundec}, $K_1$ and $K_2$ are regularly
  separable if and only if $K_1$ is finite and disjoint from $K_2$.
  Since $K_1 \cap K_2 = \emptyset$ by construction, we have regular
  separability if and only if $K_1$ is finite, which happens if and
  only if $K$ is finite.
\end{proof}

For \cref{undecidable-sep}, it remains to show that intersection is
decidable for
$\cntlang{\pseudo{\calR}}$.  We do this by expressing intersection
non-emptiness in the logic $\Sigma_1^+(\N,+,\le,1,\pseudo{\calR})$, which is
the positive $\Sigma_1$ fragment of Presburger arithmetic extended
with pseudo-$\calR$ predicates. Moreover, we show that this logic has
a decidable truth problem.

We begin with some notions from first-order logic (please
see~\cite{enderton2001mathematical} for syntax and semantics of
first-order logic). First-order formulae will be denoted by
$\phi(\bar{x}),\psi(y)$ etc. where $\bar{x}$ is a tuple of (possibly
superset of the) free variables and $y$ is a single free variable.
For a formula $\phi(\bar{x})$, we denote by
$\llbracket \phi(\bar{x}) \rrbracket$ the set of its solutions (in
our case, the domain is $\N$).

\newcommand{\structure}{\ensuremath{(\N,+,\le,1,\pseudo{\calR})}\xspace}
\newcommand{\logic}{\ensuremath{\Sigma_1^+\structure}\xspace}
\newcommand{\pifragment}{\ensuremath{\Pi_1^+\structure}\xspace}

Our decision procedure for \logic is
essentially the same as the procedure to decide the first-order theory of
automatic structures~\cite{blumensath2000automatic}, except that
instead of regular languages, we use $\calR$.  For
$\bar{w}=(w_1,w_2,\ldots,w_k) \in (\Sigma^*)^k$, the
\textit{convolution} $w_1 \otimes w_2 \otimes \ldots \otimes w_k$ is a
word over the alphabet $(\Sigma \cup \{\Box\})^k$ where $\Box$ is a
padding symbol not present in $\Sigma$. If
$w_i=w_{i1} w_{i2} \ldots w_{im_i}$ and
$m=max \{ m_1,m_2,\ldots,m_k\}$ then
\[ w_1 \otimes w_2 \otimes \ldots \otimes w_k :=\begin{bmatrix}
w'_{11}\\
w'_{21} \\
\vdots \\
w'_{k1}
\end{bmatrix}
\hdots \begin{bmatrix}
w'_{1m}\\
w'_{2m} \\
\vdots \\
w'_{km}
\end{bmatrix} \in ((\Sigma \cup \{\Box\})^k)^*\] where
$w'_{ij}=w_{ij}$ if $j \leq m_i$ else $w'_{ij}=\Box$.  The
\textit{reversal} $\rev{w}$ of a word $w=w_1w_2w_3\cdots w_n$ (where
each $w_i$ is a letter) is $\rev{w}=w_nw_{n-1}\cdots w_1$.
We say that a $k$-ary (arithmetic) relation $R \subseteq \N^k$ is a
\emph{pseudo-$\calR$ relation} if the set of words $L_R=\{ \rev{w_1}
\otimes \rev{w_2} \otimes
\cdots
\otimes
\rev{w_k} \; | \; (\nu(w_1),\ldots,\nu(w_k)) \in R \}$ \footnote{This
definition
is stated using the reversal purely for technical purposes: the
addition relation is automatic on the reverse convolution.} belongs to
$\calR$. In our decision procedure for \logic, we will show inductively that
every formula defines a pseudo-$\calR$ relation.

Formally, we consider the theory $\logic$ where \structure is the
structure with domain $\N$ of natural numbers, the constant symbol $1$
and the binary symbols $+$ and $\le$ taking their canonical
interpretations and $\pseudo{\calR}$ is a set of predicate symbols,
one for each $\pseudo{\calR}$ predicate. By $\Sigma^+_1$ we mean the
fragment of first order formulae obtained by using only the boolean
operations $\wedge,\vee$ and existential quantification.
\begin{defn}
\label{defn:posExist_presArith_LCM}
		Let $\logic$ be the set of
	first order
	logic
	formulae given
	by the following grammar:
	\begin{flalign*}
 \phi(\bar{x},\bar{y},\bar{z}) := & S(x) \; | \; t_1\le t_2 \; | \;
\phi_1(\bar{x},\bar{y}) \wedge \phi_2(\bar{x},\bar{z}) \; | \; \phi_1(
\bar{x},\bar{y}) \vee \phi_2(\bar{x},\bar{z}) \; | \; \exists y \;
\phi'(y,\bar{x})
\end{flalign*}
where $S \in \pseudo{\calR}$ and $t_1,t_2$ are terms obtained from using
variables,
$1$ and $+$.
\end{defn}

\begin{lem}
	\label{lem:posExist_presArithLCM_truth_dec}
	The truth problem for $\logic$ is decidable.
\end{lem}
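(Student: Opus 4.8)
The plan is to adapt the classical decision procedure for the first-order theory of an automatic structure~\cite{blumensath2000automatic}, replacing the role of regular languages by the class $\calR$. Concretely, I would prove by induction on the structure of a formula $\phi(\bar x)$ of \logic that its solution set $\llbracket\phi(\bar x)\rrbracket$ is a pseudo-$\calR$ relation, i.e. that the reverse-convolution language $L_{\llbracket\phi(\bar x)\rrbracket}$ lies in $\calR$, and that a reset VASS for it can be computed from $\phi$. Once this is established, decidability follows at once: a sentence $\phi$ has no free variables, so $\llbracket\phi\rrbracket$ is either $\emptyset$ or the set consisting of the empty tuple, and correspondingly $L_{\llbracket\phi\rrbracket}$ is $\emptyset$ or $\{\varepsilon\}$; hence $\phi$ is true if and only if $L_{\llbracket\phi\rrbracket}\ne\emptyset$, which is decidable by \cref{lem:emptyLCMdec}. (Equivalently, one carries a single dummy free variable through the induction.) The structural reason the argument goes through is that the positive existential fragment never forces us to complement an $\calR$-language or to project it universally---operations under which $\calR$ is \emph{not} closed---and that in the end we only have to decide emptiness, not universality.

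For the base cases, if $S\in\pseudo{\calR}$ then by definition $S=\nu(L)$ for some $L\in\calR$, and $L_{\llbracket S(x)\rrbracket}=\{\rev{w}\mid\nu(w)\in\nu(L)\}$ is obtained from $L$ by reversal and by closing under addition/removal of leading zeros, both of which are rational transductions; since $\calR$ is a full trio (\cref{thm:LCM_closure_rat_trans}), this language is in $\calR$. For an atomic formula $t_1\le t_2$ with $t_1,t_2$ built from variables, $1$ and $+$, I would first rewrite it, introducing fresh existentially quantified variables for the subterms, as a positive existential combination of atoms of the forms $x+y=z$, $x=y$, $x\le y$ and $x=1$; the reverse-convolution language of each of these relations is regular---this is precisely why the encoding uses the reversal, since carries in binary addition propagate from the least significant digit, cf.\ the automaticity of $(\N,+,\le)$---and every regular language belongs to $\calR$ (a finite automaton is a reset VASS with no counters). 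The Boolean connectives and quantifiers introduced by this rewriting are then discharged by the inductive cases.

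For the inductive step, the connectives and the existential quantifier are realised by the closure properties of $\calR$ from \cref{thm:LCM_closure_rat_trans}. For $\phi_1\wedge\phi_2$ and $\phi_1\vee\phi_2$, I would first \emph{cylindrify} each $L_{\llbracket\phi_i\rrbracket}$ to the common tuple of free variables by inserting, through a rational transduction, unconstrained tracks for the missing variables while respecting the $\Box$-padding conventions; then $L_{\llbracket\phi_1\wedge\phi_2\rrbracket}$ is the intersection and $L_{\llbracket\phi_1\vee\phi_2\rrbracket}$ the union of the cylindrified languages, and $\calR$ is closed under intersection and union. For $\exists y\,\phi'(y,\bar x)$, I would project $L_{\llbracket\phi'\rrbracket}$ onto the $\bar x$-tracks: this amounts to erasing the $y$-track by a letter-to-letter homomorphism and then renormalising the padding (deleting, respectively reinserting, trailing positions that are $\Box$ on all remaining tracks), which is again a rational transduction, so $L_{\llbracket\exists y\,\phi'\rrbracket}\in\calR$. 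All the required transducers are effectively computable from the formula, so the induction is effective.

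The proof is conceptually routine once these closure properties are in hand; the hard part is not a single deep step but the bookkeeping with convolutions and $\Box$-padding---aligning the tracks of subformulae before intersecting, trimming padding after projecting, and handling the many binary representations of a number---together with verifying that each such manipulation really is a rational transduction (or an intersection with a regular language), so that it keeps us inside $\calR$. The one genuinely non-trivial closure property being exploited is closure of $\calR$ under \emph{intersection}, used for $\wedge$; this is exactly the ingredient that would fail for, say, the context-free languages, which is why a class such as reset VASS is needed to make the construction work.
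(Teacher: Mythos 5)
Your proof follows essentially the same route as the paper's: a structural induction showing that every formula of \logic defines a pseudo-$\calR$ relation with an effectively computable reset VASS (atoms via the automaticity of $+$, $\le$, $1$ on reversed binary encodings, after introducing fresh existential variables to flatten terms; $\wedge$ and $\vee$ via cylindrification plus intersection and union; $\exists$ via projection, all resting on \cref{thm:LCM_closure_rat_trans}), followed by reducing truth of a sentence to emptiness of the associated language, which is decidable by \cref{lem:emptyLCMdec}. The one imprecision is your claim that reversal is a rational transduction in the base case for $S(x)$ (it is not), but the paper treats that base case as immediate from the definition of pseudo-$\calR$ relations, so this does not change the comparison.
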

\begin{proof}
It is clear that by introducing new existentially quantified
variables, one can transform each formula from $\logic$ into an
equivalent formula that is generated by the simpler grammar
\begin{flalign*}
 \phi(\bar{x},\bar{y},\bar{z}) := & S(x) \; | \;
x+y=z \; | \;
x=1\; | \\
&\phi_1(\bar{x},\bar{y}) \wedge \phi_2(\bar{x},\bar{z}) \; | \; \phi_1(
\bar{x},\bar{y}) \vee \phi_2(\bar{x},\bar{z}) \; | \; \exists y \;
\phi'(y,\bar{x})
\end{flalign*}
We want to show that given any input sentence $\psi$ from $\logic$, we
can decide if it is true or not. If the sentence has no variables,
then it is trivial to decide. Otherwise,
$\psi=\exists \bar{x} \; \phi(\bar{x})$ for some formula
$\phi( \bar{x})$. We claim that the solution set
$R=\llbracket \phi( \bar{x}) \rrbracket$ belongs to $\pseudo{\calR}$
and a reset VASS for $L_R$ can be effectively computed. Assuming the
claim, the truth of $\psi$ reduces to the emptiness of
$\llbracket \phi( \bar{x}) \rrbracket$ or equivalently the emptiness
of $L_R$, which is decidable by Lemma \ref{lem:emptyLCMdec}.

	We prove the claim by structural induction on the defining formula
	$\phi(
	\bar{x})$, please see \cref{proof:truth_problem} for details.
\end{proof}

\begin{rem}
	\label{undec-pi_1}
	The truth problem for $\pifragment$ is undecidable by
        reduction from the infinity problem for
        $\calR$. Given $L \subseteq \ltr{1}\ltr{0}^*$, let
        $R_L=\nu(L) \subseteq \N$ be the predicate corresponding to
        $L$. Now the downward closure
        $D:= \{ x \in \N \mid \exists y\colon x \leq y \wedge R_L(y)\}$
        is definable in $\logic$ and therefore we have
        $K \in \calR$ with $K=\nu(D)$ by the proof of
        \cref{lem:posExist_presArithLCM_truth_dec}. Then the $\Pi_1^+$-sentence
        $\forall x\colon R_K(x)$ is true if and only if $L$ is
        infinite.
\end{rem}

Having established that $\logic$ is decidable, we are ready to show
that intersection emptiness is decidable for
$\cntlang{\pseudo{\calR}}$.
\begin{lem}
	The intersection problem is decidable for $\cntlang{\pseudo{\calR}}$. 
\end{lem}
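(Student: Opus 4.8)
The plan is to reduce intersection emptiness for $\cntlang{\pseudo{\calR}}$ to the truth problem of $\logic$, which is decidable by \cref{lem:posExist_presArithLCM_truth_dec}. Given inputs $L_1,L_2\in\cntlang{\pseudo{\calR}}$, I would first invoke \cref{lem:pseudoCtMc_rat_trans} to compute decompositions $L_1=\bigcup_{i=1}^{k}T_i\ltr{a}^{P_i}$ and $L_2=\bigcup_{j=1}^{\ell}T'_j\ltr{a}^{Q_j}$, where $P_i,Q_j\in\pseudo{\calR}$ and the $T_i,T'_j$ are (effectively given) rational transductions with output alphabet $\{\ltr{a}\}$ and a common input alphabet $\Sigma$. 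Since $L_1\cap L_2=\bigcup_{i,j}\bigl(T_i\ltr{a}^{P_i}\cap T'_j\ltr{a}^{Q_j}\bigr)$, it suffices to decide, for each of the finitely many pairs $(i,j)$, whether $T_i\ltr{a}^{P_i}\cap T'_j\ltr{a}^{Q_j}=\emptyset$; so fix such a pair and abbreviate $T=T_i$, $T'=T'_j$, $P=P_i$, $Q=Q_j$.

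The key observation is that there is a word $w\in T\ltr{a}^{P}\cap T'\ltr{a}^{Q}$ if and only if there are $m\in P$ and $n\in Q$ with $(\ltr{a}^m,\ltr{a}^n)\in U$, where $U\subseteq\ltr{a}^*\times\ltr{a}^*$ is the composition $U=T^{-1}\circ T'$ of the transduction $T^{-1}$ (obtained from $T$ by swapping the two components) with $T'$: indeed $(\ltr{a}^m,\ltr{a}^n)\in U$ means exactly that there is $w\in\Sigma^*$ with $(w,\ltr{a}^m)\in T$ and $(w,\ltr{a}^n)\in T'$, i.e.\ $w\in T\ltr{a}^{P}\cap T'\ltr{a}^{Q}$ once $m\in P$ and $n\in Q$. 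Closure of rational transductions under inverse is trivial and closure under composition is effective (Elgot--Mezei), so $U$ is an effectively computable rational subset of the monoid $\ltr{a}^*\times\ltr{a}^*\cong(\N^2,+)$. By the classical theorem that the rational subsets of $\N^2$ are precisely the (effectively) semilinear sets (Ginsburg--Spanier; alternatively via Parikh's theorem applied to the regular set of edge labels of a transducer for $U$), the arithmetic relation $\widehat U:=\{(m,n)\in\N^2\mid(\ltr{a}^m,\ltr{a}^n)\in U\}$ is semilinear, and from $U$ one computes a positive existential Presburger formula $\psi_U(x,y)$ — built only from atoms $t_1\le t_2$ over $+$ and $1$, combined with $\wedge$, $\vee$, $\exists$ — with $\llbracket\psi_U(x,y)\rrbracket=\widehat U$.

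Putting this together, $T\ltr{a}^{P}\cap T'\ltr{a}^{Q}\ne\emptyset$ if and only if the sentence
\[ \exists x\,\exists y\colon\ \psi_U(x,y)\wedge P(x)\wedge Q(y) \]
is true, where $P$ and $Q$ now denote the $\pseudo{\calR}$-predicate symbols for the given predicates. This sentence belongs to $\logic$: its matrix is a positive boolean combination of $\le$-atoms over $\{+,1\}$ and two $\pseudo{\calR}$-atoms, and only $\exists$ is used. Hence its truth is decidable by \cref{lem:posExist_presArithLCM_truth_dec}, and running this test for all pairs $(i,j)$ decides whether $L_1\cap L_2=\emptyset$. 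The only mildly non-routine ingredient is the step producing the Presburger description $\psi_U$ of $\widehat U$; this is exactly where semilinearity of rational subsets of $\N^2$ is needed, and it is also why the unary shape $\ltr{a}^P$ delivered by \cref{lem:pseudoCtMc_rat_trans} is essential — for transductions over larger alphabets no such effective arithmetic description of the composed relation would be available.
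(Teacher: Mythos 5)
Your proposal is correct and follows essentially the same route as the paper: decompose both languages via \cref{lem:pseudoCtMc_rat_trans}, reduce to a single rational relation over $\ltr{a}^*\times\ltr{a}^*$ (the paper forms $T_2^{-1}T_1$ and keeps one predicate outside; you form the composition $U$ directly, which is the same device), invoke Parikh/Ginsburg--Spanier to obtain a positive existential Presburger formula for that relation, and conjoin it with the $\pseudo{\calR}$-predicate atoms to get a sentence of $\logic$ decided by \cref{lem:posExist_presArithLCM_truth_dec}. The differences are purely presentational.
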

\begin{proof}
  Given $L_1,L_2 \in \cntlang{\pseudo{\calR}}$, by Lemma
  \ref{lem:pseudoCtMc_rat_trans}, we know that both $L_1$ and $L_2$
  are finite unions of languages of the form $T\ltr{a}^S$, where $S$ is a pseudo-$\calR$ predicate. Therefore, it suffices to decide the emptiness of intersections of the form
  $T_1\ltr{a}^{S_1} \cap T_2\ltr{a}^{S_2}$ where $S_1$ and $S_2$ are
  pseudo-$\calR$ predicates. Note that
  $T_1\ltr{a}^{S_1} \cap T_2\ltr{a}^{S_2} = \emptyset$ iff
  $T_2^{-1}T_1\ltr{a}^{S_1} \cap \ltr{a}^{S_2} = \emptyset$. Since $T_2^{-1}T_1$ is again a rational transduction, it
  suffices to check emptiness of languages of the form
  $T\ltr{a}^{S_1} \cap \ltr{a}^{S_2}$ where
  $T \subseteq \ltr{a}^* \times \ltr{a}^*$ is a rational transduction.
  Notice that we can construct an automaton $\mathcal{A}$ over the
  alphabet $\Sigma'=\{b,c\}$ with the same states as the transducer
  $\mathcal{M}_T$
  for $T$ and where for any transition $p \xrightarrow{\ltr{a}^m|\ltr{a}^n} q$ of
  $\mathcal{M}_T$ we have a transition $p \xrightarrow{\ltr{b}^m\ltr{c}^n} q$ in
  $\mathcal{A}$. It is clear that $(\ltr{a}^x,\ltr{a}^y) \in T$ iff there exists a
  word $w \in \langof{\mathcal{A}}$ such that $w$ contains exactly $x$ occurrences of $\ltr{b}$
  and $y$ occurrences of $\ltr{c}$.  
  Now it follows from Parikh's theorem \cite{parikh1966context} that the set $\{
  (x,y)\in\N\times\N \mid (\ltr{a}^x,\ltr{a}^y)\in T\}$ is semilinear, meaning that there are numbers $n_0,\ldots,n_k$ and $m_0,\ldots,m_k$ such that
  $(\ltr{a}^x,\ltr{a}^y)\in T$ if and only if
  \[ \exists z_1 \exists z_2 \ldots \exists z_k \; (x= n_0 + \sum_
    {i=i}^k z_in_i) \; \wedge \; (y= m_0 + \sum_
    {i=i}^k z_im_i).\]
  In particular, there is a formula $\phi_T(x,y)$ in  $\logic$
  such that $(\ltr{a}^x,\ltr{a}^y)\in T$ if and only if $\phi_T(x,y)$ is satisfied.
  We can
	now write a formula $\phi_2(y)$ in
	$\logic$ such that
        $\phi_2(y)$ is satisfied if and only if $\ltr{a}^y\in T\ltr{a}^{S_2}$:
	\[ \phi_2(y) := \exists x \; \phi_T(x,y) \; \wedge \; S_2(x)\]
        In the same way, the formula $\phi_1(x):=S_1(x)$ defines $\ltr{a}^{S_1}$.        
        Now define $\phi=\exists x \; \phi_1(x) \; \wedge \; \phi_2(x)$. Then clearly
$\phi \text{ is true }$ if and only if $T\ltr{a}^{S_2} \cap \ltr{a}^{S_1} \neq \emptyset$.
Decidability of $\Inter{\cntlang{\pseudo{\calR}}}{\cntlang{\pseudo{\calR}}}$ follows from 
\cref{lem:posExist_presArithLCM_truth_dec}.
\end{proof}





\section{Decidable Regular Separability and Undecidable Intersection}
\label{sec:decidable_regular_separability_and_undecidable_intersection}

In this section, we present language classes $\calC$ and $\calD$ so that
$\Inter{\calC}{\calD}$ is undecidable, but $\RegSep{\calC}{\calD}$ is decidable.
These classes are constructed using higher-order pushdown automata, which we define first.

We follow the definition of~\cite{HagueKochemsOng2016}. Higher-order pushdown automata are a generalization of pushdown automata where
instead of manipulating a stack, one can manipulate a stack of stacks
(order-$2$), a stack of stacks of stacks (order-$3$), etc.
Therefore, we begin by defining these higher-order stacks. While for ordinary (i.e. order-$1$) pushdown
automata, stacks are words over the stack alphabet $\Gamma$, order-$(k+1)$
stacks are sequences of order-$k$ stacks.
Let $\Gamma$ be an alphabet and $k\in\N$. The set of \emph{order-$k$
  stacks} $\stacks{\Gamma}{k}$ is inductively defined as follows:
\[ \stacks{\Gamma}{0}=\Gamma, ~~~~\stacks{\Gamma}{k+1}=\{[s_1\cdots s_m]_{k+1} \mid m\ge 1,~s_1,\ldots,s_m\in\stacks{\Gamma}{k} \}.\]
For a word $v\in\Gamma^+$, the stack $[\cdots[[v]_1]_2\cdots]_k$ is
also denoted $\llbracket v\rrbracket_k$. The function $\tops$ yields the topmost
symbol from $\Gamma$. This means, we have
$\tops([s_1\cdots s_m]_1)=s_m$ and $\tops([s_1\cdots s_m]_k)=\tops(s_m)$ for $k>1$.

Higher-order pushdown automata operate on higher-order stacks by way of
instructions. For the stack alphabet $\Gamma$ and for order-$k$ stacks, we have
the instruction set
$\instr{\Gamma}{k}=\{\push{i}, \pop{i}\mid 1\le i\le k\}\cup \{\rew{\gamma}\mid
\gamma\in\Gamma\}$. These instructions act on $\stacks{\Gamma}{k}$ as follows:
\begin{alignat*}{3}
  &[s_1\cdots s_m]_1 \cdot \rew{\gamma}&&=[s_1 \cdots s_{m-1}\gamma]_1 &\quad& \\
  &[s_1\cdots s_m]_k\cdot \rew{\gamma}&&=[s_1 \cdots s_{m-1}(s_m\cdot\rew{\gamma})]_k && \text{if $k>1$} \\
  &[s_1\cdots s_m]_i\cdot \push{i} &&= [s_1\cdots s_m s_m]_i & &  \\
  &[s_1\cdots s_m]_k\cdot \push{i} &&= [s_1\cdots s_m~(s_m\cdot\push{i})]_k & & \text{if $k>i$} \\
  &[s_1 \cdots s_m]_i\cdot \pop{i}&&= [s_1\cdots s_{m-1}]_i & & \text{if $m\ge 2$} \\
  &[s_1\cdots s_m]_k \cdot \pop{i}&&= [s_1\cdots s_{m-1}~(s_m\cdot\pop{i})]_k & & \text{if $k>i$}
\end{alignat*}
and in all other cases, the result is undefined.
For a word
$w\in (\instr{\Gamma}{k})^*$ and a stack $s\in\stacks{\Gamma}{k}$, the
stack $s\cdot w$ is defined inductively by $s\cdot\varepsilon=s$ and
$s\cdot (wx)=(s\cdot w)\cdot x$ for $x\in\instr{\Gamma}{k}$.

An \emph{(order-$k$) higher-order pushdown automaton} (short HOPA) is a tuple
$\calA=(Q,\Sigma,\Gamma,\bot,E,q_0,F)$, where $Q$ is a finite set of
\emph{states}, $\Sigma$ is its \emph{input alphabet}, $\Gamma$ is its
\emph{stack alphabet}, $\bot\in\Gamma$ is its \emph{stack bottom symbol}, $E$ is
a finite subset of
$Q\times\Sigma^*\times\Gamma\times(\instr{\Gamma}{k})^*\times Q$ whose elements
are called \emph{edges}, $q_0\in Q$ is its \emph{initial state}, and
$F\subseteq Q$ is its set of \emph{final states}. A \emph{configuration} is a
pair $(q,s)\in Q\times\stacks{\Gamma}{k}$. When drawing a higher-order pushdown automaton,
an edge $(q,u,\gamma,v,q')$ is represented by an arc $q\xrightarrow{u|\gamma|v}q'$.
An arc $q\xrightarrow{u|v}q'$ means that for each $\gamma\in\Gamma$, there is an edge
$(q,u,\gamma,v,q')$.

For configurations $(q,s),(q',s')$ and a word $u\in\Sigma^*$, we write
$(q,s)\autstepp[\calA]{u}(q',s')$ if there are edges
$(q_1,u_1,\gamma_1,v_1,q_2),(q_2,u_2,\gamma_2,v_2,q_3),\ldots,(q_{n-1},u_{n-1},\gamma_{n-1},v_{n-1},q_n)$
in $E$ and stacks $s_1,\ldots,s_n\in\stacks{\Gamma}{k}$ with
$\tops(s_i)=\gamma_i$ and $s_i\cdot v_{i}=s_{i+1}$ for $1\le i\le n-1$
such
that $(q,s)=(q_1,s_1)$ and $(q',s')=(q_n,s_n)$ and $u=u_1\cdots u_n$.  The
\emph{language accepted by $\calA$} is defined as
\[ \langof{\calA}=\{w\in\Sigma^* \mid (q_0,\llbracket\bot\rrbracket_k)\autstepp[\calA]{w}(q,s)~\text{for some $q\in F$ and $s\in\stacks{\Gamma}{k}$}\}.\]
The languages accepted by order-$k$ pushdown automata are called \emph{order-$k$ pushdown languages}. By $\calH$, we denote the class of languages accepted by an order-$k$ pushdown automaton
for some $k\in\N$.
In our example of classes with decidable regular separability and undecidable
intersection, one of the two classes is $\calH$. The other class will again be
defined using incrementing automata.

\begin{defn}
  \label{defn:paraPDlang} Let $\calC$ be a language class. A predicate
  $P\subseteq\N$ is a \emph{power-$\calC$ predicate} if
  $P=\N\setminus 2^{\N}~\cup~\{2^{\nu(w)}\mid w\in L\}$ for some language $L$
  from $\calC$.  The class of power-$\calC$ predicates is denoted $\power{\calC}$.
\end{defn}

Our example of classes with decidable regular separability but
undecidable intersection is $\calH$ on the one hand and
$\cntlang{\power{\calH}}$ on the other hand.
\begin{thm}\label{undecidable-inter}
  $\RegSep{\calH}{\cntlang{\power{\calH}}}$ is decidable, whereas
$\Inter{\calH}{\cntlang{\power{\calH}}}$ is undecidable. 
\end{thm}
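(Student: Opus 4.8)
The plan is to prove the two halves of Theorem~\ref{undecidable-inter} separately, mirroring the structure used for Theorem~\ref{undecidable-sep} but with the roles of the two problems swapped.

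\textbf{Undecidability of intersection.} First I would reduce some undecidable problem about $\calH$ — the natural candidate is intersection emptiness $\Inter{\calH}{\calH}$, which is undecidable since it already is for context-free (order-$1$) languages, or equivalently the non-emptiness of $\{2^{\nu(w)} \mid w \in L\}$ for $L \in \calH$ applied to a suitable encoding. Given $L_1, L_2 \in \calH$, I want to build $K_1 \in \calH$ and $K_2 \in \cntlang{\power{\calH}}$ with $K_1 \cap K_2 = \emptyset$ iff $L_1 \cap L_2 = \emptyset$ (or some similarly reduced condition). The idea: use a rational transduction to turn $L_1$ into the language $K_1 = \{\ltr{a}^{\nu(\ltr{1}w)} \mid w \in L_1'\}$-style object living inside $\ltr{a}^{2^\N}$ (so $\nu$-values are powers of two), which stays in $\calH$ since $\calH$ is a full trio; and take $K_2 = \ltr{a}^P$ where $P = \power{\calH}$ is the power-$\calH$ predicate built from $L_2$, i.e. $P = (\N \setminus 2^\N) \cup \{2^{\nu(w)} \mid w \in L_2\}$, which by definition puts $K_2 \in \cntlang{\power{\calH}}$. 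Since $K_1 \subseteq \ltr{a}^{2^\N}$, an element $\ltr{a}^{2^n} \in K_1$ lies in $K_2$ iff $2^n \in P$ iff $n = \nu(w)$ for some $w \in L_2$. So $K_1 \cap K_2 \neq \emptyset$ iff there is a word read by the first machine whose encoding matches one in $L_2$ — and one sets up the transductions so this is exactly $L_1 \cap L_2 \neq \emptyset$. Undecidability then follows.

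\textbf{Decidability of regular separability.} This is the heart of the theorem and where the main obstacle lies. By Lemma~\ref{lem:pseudoCtMc_rat_trans}, every language in $\cntlang{\power{\calH}}$ is a finite union of $T\ltr{a}^P$ with $P \in \power{\calH}$, and since regular separability of $L_1$ from a finite union $\bigcup_i M_i$ is equivalent to regular separability of $L_1$ from each $M_i$ (a union of separators intersected appropriately — more precisely $\regsep{L_1}{M}$ and $\regsep{L_1}{M'}$ implies $\regsep{L_1}{M \cup M'}$, and conversely a separator for the union works for each part), one reduces to separating a single $L \in \calH$ from a single $T\ltr{a}^P$. Using closure of $\calH$ under rational transductions and the standard trick $\regsep{L_1}{T L_2} \iff \regsep{T^{-1}L_1}{L_2}$, this further reduces to deciding $\regsep{L}{\ltr{a}^P}$ for $L \in \calH$ over alphabet $\ltr{a}$ and $P$ a power-$\calH$ predicate. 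Now the crucial structural fact: $P \supseteq \N \setminus 2^\N$, so $\ltr{a}^P \supseteq \ltr{a}^{\N \setminus 2^\N}$. This is exactly the hypothesis of Lemma~\ref{lem:unary_reg_sep_pow2}: writing $L \cap \ltr{a}^* = \ltr{a}^{S}$ for $S = \{n : \ltr{a}^n \in L\}$ (note $L$ might not be unary, but after the transduction reductions it is, or we intersect with $\ltr{a}^*$), we get that $\ltr{a}^S$ and $\ltr{a}^P$ are regularly separable iff $S$ is finite and disjoint from $P$. So the problem reduces to two decidable checks: (a) is $L \cap \ltr{a}^*$ finite? — this is the finiteness/infinity problem for $\calH$, which is decidable because $\calH$ has a decidable boundedness-type property (order-$k$ pushdown languages have decidable finiteness, via decidability of the emptiness and the well-known results on higher-order pushdown automata); and (b) given that $L \cap \ltr{a}^*$ is finite, is it disjoint from $P$? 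Since a finite set $S \subseteq 2^\N$ (the only interesting case, as $P \supseteq \N \setminus 2^\N$ forces $S \cap (\N\setminus 2^\N)$ to be checked trivially — actually $S$ finite disjoint from $P$ requires $S \subseteq 2^\N$) has effectively computable elements, and membership "$2^n \in P$" is "$n = \nu(w)$ for some $w \in L_2$", i.e. the membership problem for $\calH$, which is decidable, check (b) goes through.

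\textbf{Main obstacle.} The delicate point is handling the reduction from a finite union and from a non-unary $L \in \calH$ cleanly: one must verify that after applying $T^{-1}$ the resulting language is still in $\calH$ (full trio, fine) and that intersecting with $\ltr{a}^*$ preserves membership in $\calH$ and does not disturb the separability question — it does not, since any regular separator $R$ of $L$ from $\ltr{a}^P \subseteq \ltr{a}^*$ can be replaced by $R \cap \ltr{a}^*$. The second subtle point is pinning down exactly which decidability results for higher-order pushdown automata are invoked: decidability of emptiness (standard, e.g.~\cite{HagueKochemsOng2016}), of membership (standard), and of finiteness. Finiteness of order-$k$ pushdown languages is decidable — this can be obtained because $\calH$ is a full trio with decidable emptiness and one can decide, via a pumping argument or reduction to reachability with a length-counting transduction, whether the language is infinite; I would cite this or derive it from known results. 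Assembling these, the decision procedure for $\RegSep{\calH}{\cntlang{\power{\calH}}}$ is: decompose, reduce to $\regsep{\ltr{a}^S}{\ltr{a}^P}$, apply Lemma~\ref{lem:unary_reg_sep_pow2}, and run the finiteness and membership oracles for $\calH$.
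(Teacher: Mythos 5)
Your decidability half is essentially the paper's own argument: decompose the $\cntlang{\power{\calH}}$ language into a finite union of $T\ltr{a}^P$ (\cref{lem:pseudoCtMc_rat_trans}), reduce via \cref{regsep-unions} and the transduction-moving identity $\regsep{L}{TK}\iff\regsep{T^{-1}L}{K}$ to $\regsep{L}{\ltr{a}^P}$ with $L\subseteq\ltr{a}^*$ (note $T^{-1}L_1\subseteq\ltr{a}^*$ automatically, so no separate intersection step is needed), apply \cref{lem:unary_reg_sep_pow2} using $P\supseteq\N\setminus 2^\N$, and finish with decidability of finiteness (\cref{finiteness-hopa}) and membership for $\calH$. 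That part is correct.

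The undecidability half has a genuine gap. You need $K_1=\{\ltr{a}^{2^{\nu(w)}}\mid w\in L_1\}$ (or an equivalent encoding) to lie in $\calH$, and you justify this by saying it is obtained from $L_1$ by a rational transduction, so full-trio closure applies. That is false: a rational transduction cannot realize the map $w\mapsto \ltr{a}^{2^{\nu(w)}}$, whose output length is doubly exponential in $|w|$ (rational transductions produce output of length linear in the input length, up to a constant factor). Establishing that this language is in $\calH$ is precisely the technical heart of this direction of the theorem: the paper proves (\cref{lem:bin_inc_stack_depth}) that the binary-to-unary conversion $L\mapsto\{\ltr{1}\ltr{0}^{\nu(w)}\mid w\in L\}$ takes an order-$k$ pushdown language to an order-$(k+2)$ one, via an explicit higher-order stack construction, and then applies this lemma \emph{twice} to turn a context-free $K_1\subseteq\ltr{1}\{\ltr{0},\ltr{1}\}^*$ into the order-$5$ language $\{\ltr{1}\ltr{0}^{2^{\nu(w)}}\mid w\in K_1\}$. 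Without some such order-increasing construction your reduction does not produce a language in $\calH$, and there is no way around it on the other side either, since every $\power{\calH}$ predicate contains all of $\N\setminus 2^\N$ by definition. The rest of your reduction (choosing $P_2=\N\setminus 2^{\N}\cup 2^{\nu(K_2)}$ and observing that the junk part $\N\setminus 2^\N$ cannot meet a language contained in $\ltr{a}^{2^\N}$) matches the paper and is fine.
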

Note that decidable regular separability implies that
$\cntlang{\power{\calH}}$ has a decidable emptiness problem: For
$L\subseteq\Sigma^*$, one has $\regsep{\Sigma^*}{L}$ if and only if
$L=\emptyset$. Moreover, note that we could not have chosen $\calH$ as
our counterexample, because regular separability is undecidable for
$\calH$ (already for context-free
languages)~\cite{SzymanskiW-sicomp76,DBLP:journals/jacm/Hunt82a}.

For showing \cref{undecidable-inter}, we rely on two ingredients. The
first is that infinity is decidable for higher-order pushdown
languages. This is a direct consequence of the decidability of the
more general simultaneous unboundedness
problem~\cite{DBLP:conf/icalp/Zetzsche15} or diagonal
problem~\cite{DBLP:journals/dmtcs/CzerwinskiMRZZ17}, which were shown
decidable for higher-order pushdown automata by Hague, Kochems and
Ong~\cite{HagueKochemsOng2016}.
\begin{lem}[\cite{HagueKochemsOng2016}]\label{finiteness-hopa}
  $\Inf{\calH}$ is decidable.
\end{lem}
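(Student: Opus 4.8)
The plan is to reduce $\Inf{\calH}$ to the \emph{diagonal problem} (equivalently, the simultaneous unboundedness problem) for $\calH$, which Hague, Kochems, and Ong proved decidable for higher-order pushdown automata~\cite{HagueKochemsOng2016}. Recall that the diagonal problem asks, given a language $L\subseteq\Sigma^*$ and letters $a_1,\dots,a_n\in\Sigma$, whether for every $N\in\N$ there is some $w\in L$ with $|w|_{a_i}\ge N$ for all $i$; for $n=1$ it simply asks whether $L$ contains words with arbitrarily many occurrences of a single letter.

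First I would record that $\calH$ is a full trio. For each fixed order $k$, the order-$k$ pushdown languages are effectively closed under homomorphic images, inverse homomorphic images, and intersection with regular languages, hence by \cref{alt_fulltrio} under rational transductions; moreover these operations do not increase the order, so the same holds for the union $\calH$ over all orders.

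Next, starting from an order-$k$ pushdown automaton $\calA$ with $L=\langof{\calA}\subseteq\Sigma^*$, I would apply the homomorphism $h\colon\Sigma^*\to\{\ltr{a}\}^*$ sending every letter to a single fixed letter $\ltr{a}$. By the full trio property one effectively obtains an order-$k$ pushdown automaton for $L'=h(L)=\{\ltr{a}^{|w|}\mid w\in L\}$. Since over a unary alphabet a language is infinite exactly when it contains words of unbounded length, $L$ is infinite if and only if $L'$ is infinite, which in turn holds if and only if for every $N\in\N$ there is $u\in L'$ with $|u|_{\ltr{a}}\ge N$. This last condition is precisely the instance of the diagonal problem for $\calH$ on input $L'$ with the single letter $\ltr{a}$, and it is decidable by~\cite{HagueKochemsOng2016}; this would complete the proof.

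I do not expect a real obstacle here: essentially all the difficulty is already packaged into the cited decidability of the diagonal problem, and the only step needing (routine) care is the effective full-trio property, which is what guarantees that an order-$k$ automaton for $L'$ can actually be constructed. (One could instead attempt a direct pumping argument on the accepting runs of $\calA$, but correctly iterating pumps across all stack orders is considerably more delicate, so I would prefer the reduction above.)
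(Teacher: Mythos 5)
Your reduction is correct and follows exactly the route the paper indicates: the paper states (without spelling out details) that \cref{finiteness-hopa} is a direct consequence of the decidability of the simultaneous unboundedness/diagonal problem for higher-order pushdown automata from \cite{HagueKochemsOng2016}, which is precisely your argument via projection onto a unary alphabet using the full trio property. You have merely filled in the (routine) details that the paper leaves implicit.
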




The other ingredient is that turning binary representations into unary ones
can be achieved in higher-order pushdown automata.
\begin{lem}
  \label{lem:bin_inc_stack_depth}
  If $L\subseteq\{\ltr{0},\ltr{1}\}^*$ is an order-$k$ pushdown language, then $
  L'=\{\ltr{1}\ltr{0}^{\nu(w)} \; | \; w \in L \}$ is an order-$(k+2)$ pushdown
  language.
\end{lem}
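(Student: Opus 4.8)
The plan is to transform a given order-$k$ HOPA $\calA=(Q,\{\ltr{0},\ltr{1}\},\Gamma,\bot,E,q_0,F)$ for $L$ into an order-$(k+2)$ HOPA $\calA'$ for $L'$ that works in two phases: it first reads the leading $\ltr{1}$ and simulates a run of $\calA$ on a nondeterministically guessed word $w\in\{\ltr{0},\ltr{1}\}^*$, recording the guessed letters of $w$ on stack level $k+1$, and it then consumes those records and matches exactly $\nu(w)$ occurrences of $\ltr{0}$ against the remaining input, using the outermost level $k+2$ to realize the doubling built into the definition of $\nu$. Before starting I would normalize $\calA$ so that every edge reads at most one input letter.

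The states of $\calA$ become states of $\calA'$, together with a fresh start state, a merge state $q_f$, and an output state $p$. From the stack $\llbracket\bot'\rrbracket_{k+2}$, reading $\ltr{1}$, the start state performs $\push{k+2}\rew{\#}\push{k+1}\rew{\bot}$; this yields $[\,\llbracket\bot'\rrbracket_{k+1}\ \ [\,\llbracket\#\rrbracket_k\ \ \llbracket\bot\rrbracket_k\,]_{k+1}\,]_{k+2}$, i.e.\ $\calA$'s initial configuration $\llbracket\bot\rrbracket_k$ sits on top of a floor symbol $\#$ inside a fresh order-$(k+1)$ stack, which in turn sits above a sentinel order-$(k+1)$ stack carrying $\bot'$. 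During the simulation, a letter-reading edge $(r,\ltr{a},\gamma,v,r')$ of $\calA$ is replaced by the $\varepsilon$-edge $(r,\varepsilon,\gamma,\ \push{1}\rew{\ltr{a}}\,\push{k+1}\,\pop{1}\,v,\ r')$: the prefix $\push{1}\rew{\ltr{a}}\push{k+1}\pop{1}$ inserts, directly below $\calA$'s working stack, a copy of that stack whose top symbol has been overwritten by the guessed letter $\ltr{a}$ --- a \emph{record} --- and then restores the working stack, so that $v$ acts exactly as it would in $\calA$; $\varepsilon$-edges of $\calA$ are kept as they are, and each final state of $\calA$ receives an $\varepsilon$-edge to $q_f$. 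Thus when the simulation reaches $q_f$ after guessing $w=\ltr{a}_1\cdots\ltr{a}_m\in L$, the topmost order-$(k+1)$ stack is $[\,\llbracket\#\rrbracket_k\ \ \rho_1\ \cdots\ \rho_m\ \ s\,]_{k+1}$ with $\tops(\rho_i)=\ltr{a}_i$ and $s$ the (now irrelevant) final stack of $\calA$.

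The output phase is entered by an edge $q_f\to p$ doing $\pop{k+1}$ (discarding $s$), after which $p$ loops: if the top symbol is a tag $\ltr{1}$ it reads one $\ltr{0}$ from the input, if it is a tag $\ltr{0}$ it reads nothing, and in both cases it then performs $\pop{k+1}\push{k+2}$, dropping that record and duplicating the \emph{remaining} record-stack onto level $k+2$; if the top symbol is the floor $\#$ it performs $\pop{k+2}$; and if it is the sentinel $\bot'$ it moves to an accepting state. Correctness rests on the invariant that processing a record-stack $[\,\llbracket\#\rrbracket_k\ \rho_1\cdots\rho_\ell\,]_{k+1}$ with $\tops(\rho_i)=\ltr{c}_i$ reads exactly $\nu(\ltr{c}_1\cdots\ltr{c}_\ell)$ occurrences of $\ltr{0}$ and then exposes whatever order-$(k+1)$ stack lay beneath it. I would prove this by induction on $\ell$: after reading $\nu(\ltr{c}_\ell)\in\{0,1\}$ symbols for the top record, $\pop{k+1}\push{k+2}$ leaves two copies of the length-$(\ell-1)$ record-stack stacked on level $k+2$; by induction each is processed in turn, the two $\pop{k+2}$ steps that close off a run on one copy being exactly what exposes the next copy, so they together contribute $2\,\nu(\ltr{c}_1\cdots\ltr{c}_{\ell-1})$, and $\nu(\ltr{c}_1\cdots\ltr{c}_\ell)=2\nu(\ltr{c}_1\cdots\ltr{c}_{\ell-1})+\nu(\ltr{c}_\ell)$. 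Applied to the record-stack present at the start of the output phase (whose tags spell $w$ from the floor upward), this gives $\langof{\calA'}=\{\ltr{1}\ltr{0}^{\nu(w)}\mid w\in L\}=L'$.

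The bookkeeping parts --- normalizing $\calA$, checking that after each compound instruction the intended order-$k$ stack is on top, the degenerate case $w=\varepsilon$, and the fact that the sentinel is never popped --- are routine. The real content, and the step I expect to be most delicate, is the two-level coordination: the construction must record the guessed word without disturbing $\calA$'s own order-$k$ stack (which is exactly what the gadget $\push{1}\rew{\ltr{a}}\push{k+1}\pop{1}$ accomplishes), and it must convert the binary value $\nu(w)$ into a unary block of $\ltr{0}$'s, i.e.\ perform doubling, using only one further level; the idea there is that executing a $\push{k+2}$ after \emph{every} $\pop{k+1}$ forces each suffix of the record-stack to be replayed twice, which is precisely the recursion $\nu(u\ltr{b})=2\nu(u)+\nu(\ltr{b})$ unfolded on the stack.
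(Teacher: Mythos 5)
Your construction is identical to the paper's: the same initialization gadget $\push{k+2}\rew{\#}\push{k+1}\rew{\bot}$, the same recording instruction $\push{1}\rew{a}\push{k+1}\pop{1}v$ placing a tagged copy below the working order-$k$ stack, and the same output loop in $p$ using $\pop{k+1}\push{k+2}$ on tags and $\pop{k+2}$ on the floor symbol. The only difference is presentational: you establish correctness by induction on the length of the record-stack (which conveniently gives termination for free), whereas the paper uses a potential function $\mu$ summing $\nu$ over the order-$(k+1)$ stacks, with a separate lexicographic argument for termination.
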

\begin{proof}
  Let $\calA$ be an order-$k$ HOPA accepting $L\subseteq\{\ltr{0},\ltr{1}\}^*$.
  We construct an order-$(k+2)$ HOPA $\calA'$ for $L'$.
  We may clearly assume that $\calA$ has only one final state $q_f$.  The
  following diagram describes $\calA'$:
  \begin{center}
    \newhopa
  \end{center}
  The HOPA $\calA'$ starts in the configuration
  $(q'_0,\llbracket\bot'\rrbracket_{k+2})$ and in moving to $q_0$, it reads
  $\ltr{1}$ and goes to
  $(q_0,[\llbracket\bot'\rrbracket_{k+1}[\llbracket\#\rrbracket_k\llbracket \bot\rrbracket_k]_{k+1}]_{k+2})$.
  In the part in the dashed rectangle, $\calA'$ simulates $\calA$. However,
  instead of reading an input symbol $a\in\{\ltr{0},\ltr{1}\}$, $\calA$ stores
  that symbol on the stack. In order not to interfer with the simulation of
  $\calA$, this is done by copying the order-$k$ stack used by $\calA$ and
  storing $a$ in the copy below. This is achieved as follows. For every edge
  $p\xrightarrow{a|\gamma|v}q$ with $v\in(\instr{\Gamma}{k})^*$, $\calA'$
  instead has an edge
  \begin{center}
    \newedge
  \end{center}
  This pushes the input symbol $a$ on the (topmost order-$k$) stack, makes a
  copy of the topmost order-$k$ stack, removes the $a$ from this fresh copy, and
  then excutes $v$. Edges $p\xrightarrow{\varepsilon|\gamma|v}$ (i.e. ones that
  read $\varepsilon$ from the input) are kept.
  
  When $\calA'$ arrives in $q_f$, it has a stack
  $[\llbracket\bot'\rrbracket_{k+1}[\llbracket\#\rrbracket_k s_1\cdots
  s_{m}s]_{k+1}]_{k+2}$, where $s$ is the order-$k$ stack reached in the
  computation of $\calA$, and $s_1,\ldots, s_m$ store the input word
  $w\in\Sigma^*$ read by $\calA$, meaning
  $\tops(s_1)\cdots\tops(s_m)=w$. When moving to $p$, $\calA'$
  removes $s$ so as to obtain
  $[\llbracket\bot'\rrbracket_{k+1}[\llbracket\#\rrbracket_k s_1\cdots
  s_m]_{k+1}]_{k+2}$ as a stack.

  In $p$, $\calA'$ reads the input word $\ltr{0}^{\nu(w)}$ as follows. While in $p$,
  the stack always has the form
  \begin{equation} t=[\llbracket\bot'\rrbracket_{k+1} t_1\cdots
  t_\ell]_{k+2}, \label{powering-hopa-stacks}\end{equation}
  where each $t_i$ is an order-$(k+1)$ stack of the form
  $[\llbracket\#\rrbracket_k s_1\cdots s_m]_{k+1}$ for some order-$k$ stacks
  $s_1,\ldots,s_m\in\stacks{\Gamma}{k}$.  To formulate an invariant that holds
  in state $p$, we define a function $\mu$ on the stacks as in
  \eqref{powering-hopa-stacks}. First, if
  $t_i=[\llbracket\#\rrbracket_k s_1\cdots s_m]_{k+1}$, then let
  $\mu(t_i)=\nu(\tops(s_1)\cdots\tops(s_m))$. Next, let $\mu(t)=\mu(t_1)+\cdots+\mu(t_\ell)$.
  It is not hard to see that the loops on $p$ preserve the following invariant:
  If $\ltr{0}^r$ is the input word read from configuration $(p,t)$ to $(p,t')$, then
  $\mu(t)=r+\mu(t')$. To see this, consider a one step
  transition $(p,t) \xrightarrow{\varepsilon|\ltr{0}|\pop{k+1}
  \push{k+2}} (p,t')$. If
  \[t=[\llbracket\bot'\rrbracket_{k+1} t_1\cdots
  t_\ell]_{k+2}=[\llbracket\bot'\rrbracket_{k+1} t_1\cdots
  t_{\ell-1} [\llbracket\#\rrbracket_k s_1\cdots s_m]_{k+1}]_{k+2}\]
  then
   \[t'=[\llbracket\bot'\rrbracket_{k+1} t_1\cdots
  t_{\ell-1} [\llbracket\#\rrbracket_k s_1\cdots s_{m-1}]_{k+1}
  [\llbracket\#\rrbracket_k s_1\cdots s_{m-1}]_{k+1}]_{k+2}.\]
  If $w=\tops(s_1)\ldots\tops(s_m)$ then $w=w'\ltr{0}$ where $w'=
  \tops(s_1)\cdots\tops(s_{m-1})$ since we popped $s_m$ off the stack.  Moreover,
  \begin{flalign*}
    \mu(t')=\sum_{i=1}^{\ell-1}\mu(t_i) + 2\nu(w')=\sum_{i=1}^{\ell-1}\mu(t_i) + \nu(w)=\mu(t)
  \end{flalign*}
  Similarly we see that if the transition taken is $\ltr{0}|
  \ltr{1}|\pop{k+1}\push{k+2}$ then we get 
  $\mu(t')=\mu(t)+1$.
  By induction on the length of the run, we get $\mu(t)=r+\mu(t')$
  when $\ltr{0}^r$ is read.

  Now observe that when $\calA'$ first arrives in $p$ with stack $t$, then by
  construction we have $\ell=1$ and $\mu(t)=\mu(t_1)=\nu(w)$. Moreover, when
  $\calA'$ moves on to $q'_f$ with a stack as in \eqref{powering-hopa-stacks},
  then $\ell=0$ and thus $\mu(t)=0$. Thus, the invariant implies that if
  $\calA'$ reads $\ltr{0}^r$ while in $p$, then $r=\nu(w)$. This means, $\calA'$
  has read $\ltr{1}\ltr{0}^{\nu(w)}$ in total.

  Finally, from a stack $t$ as in
  \eqref{powering-hopa-stacks}, $\calA'$ reaches $q'_f$ in finitely
  many steps, please see \cref{proof:stack_termination}. 
 \end{proof}

\begin{lem}
	\label{lem:undec_inter}
	The problem $\Inter{\calH}{\cntlang{\power{\calH}}}$ is undecidable.
\end{lem}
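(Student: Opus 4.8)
The plan is to reduce the intersection emptiness problem for context-free languages, which is classically undecidable, to $\Inter{\calH}{\cntlang{\power{\calH}}}$. Since order-$1$ pushdown automata are ordinary pushdown automata, context-free languages lie in $\calH$, so it suffices to encode a CFL instance $(K_1,K_2)$ into an instance of $\Inter{\calH}{\cntlang{\power{\calH}}}$. Before encoding I would normalise: applying a standard binary encoding that is injective on words and then prepending the letter $\ltr{1}$ replaces $K_1,K_2$ by context-free languages contained in $\ltr{1}\{\ltr{0},\ltr{1}\}^*$ with the same answer to ``$K_1\cap K_2=\emptyset$?'', and on $\ltr{1}\{\ltr{0},\ltr{1}\}^*$ the map $\nu$ is injective. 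So I may assume $K_1,K_2\subseteq\ltr{1}\{\ltr{0},\ltr{1}\}^*$ are context-free and $\nu$ is injective on $K_1\cup K_2$, intersection emptiness still being undecidable for such pairs.

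From $K_1$ I would form $M_1:=\{\ltr{1}\ltr{0}^{2^{\nu(w)}}\mid w\in K_1\}$ and argue $M_1\in\calH$ by applying \cref{lem:bin_inc_stack_depth} twice. Applied to the order-$1$ language $K_1$, it shows that $A:=\{\ltr{1}\ltr{0}^{\nu(w)}\mid w\in K_1\}$ is an order-$3$ pushdown language; since $A\subseteq\ltr{1}\ltr{0}^*\subseteq\{\ltr{0},\ltr{1}\}^*$ and $\nu(\ltr{1}\ltr{0}^{m})=2^{m}$, a second application (to $A$) shows that $\{\ltr{1}\ltr{0}^{\nu(u)}\mid u\in A\}=M_1$ is an order-$5$ pushdown language, so $M_1\in\calH$. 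From $K_2$ I would take the power-$\calH$ predicate $P:=(\N\setminus 2^{\N})\cup\{2^{\nu(w)}\mid w\in K_2\}$ (\cref{defn:paraPDlang} with $L=K_2\in\calH$) and let $M_2:=\{\ltr{1}\ltr{0}^{n}\mid n\in P\}$. This $M_2$ is accepted by a two-state incrementing automaton over $\power{\calH}$ (read $\ltr{1}$ without incrementing, then loop reading $\ltr{0}$ while incrementing, with acceptance pair $(q,P)$ in the looping state), so $M_2\in\cntlang{\power{\calH}}$; both $M_1$ and $M_2$ are computable from $K_1,K_2$.

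It then remains to verify that $M_1\cap M_2\ne\emptyset$ iff $K_1\cap K_2\ne\emptyset$. Every word of $M_1$ is $\ltr{1}\ltr{0}^{2^{\nu(w)}}$ for some $w\in K_1$, and it lies in $M_2$ iff $2^{\nu(w)}\in P$. Since $2^{\nu(w)}\in 2^{\N}$, it is not in $\N\setminus 2^{\N}$, so $2^{\nu(w)}\in P$ iff $2^{\nu(w)}=2^{\nu(w')}$ for some $w'\in K_2$, which by injectivity of $\nu$ on $\ltr{1}\{\ltr{0},\ltr{1}\}^*$ holds iff $w\in K_2$. Hence $M_1\cap M_2=\{\ltr{1}\ltr{0}^{2^{\nu(w)}}\mid w\in K_1\cap K_2\}$, which is nonempty exactly when $K_1\cap K_2\ne\emptyset$; this yields the desired reduction.

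The point requiring care is soundness, i.e.\ the implication $K_1\cap K_2=\emptyset\Rightarrow M_1\cap M_2=\emptyset$: the predicate $P$ is padded with all non-powers of $2$, so the only residual information about $K_2$ is carried on exactly the powers-of-$2$ lengths produced by $M_1$, and this padding is harmless precisely because the lengths used in $M_1$ are genuine powers of $2$ and $\nu$ is injective on canonical binary words — which is why the normalisation step is needed. The remaining closure facts are exactly \cref{lem:bin_inc_stack_depth} and \cref{lem:pseudoCtMc_rat_trans} (or the direct incrementing-automaton construction), and the undecidability of CFL intersection emptiness is classical.
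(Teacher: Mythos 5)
Your proof is correct and follows essentially the same route as the paper: the same reduction from CFL intersection emptiness, the same languages $\{\ltr{1}\ltr{0}^{2^{\nu(w)}}\mid w\in K_1\}$ (via a double application of \cref{lem:bin_inc_stack_depth}) and $\{\ltr{1}\ltr{0}^{n}\mid n\in (\N\setminus 2^{\N})\cup 2^{\nu(K_2)}\}$, and the same use of injectivity of $\nu$ on $\ltr{1}\{\ltr{0},\ltr{1}\}^*$. Your added remarks on the soundness of the padding with $\N\setminus 2^{\N}$ are accurate but do not change the argument.
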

\begin{proof}
  We reduce intersection emptiness for context-free languages,
  which is
  well-known to be undecidable \cite{hartmanis1967context}, to $\Inter{\calH}{\cntlang{\power{\calH}}}$. Let
  $K_1,K_2\subseteq\{\ltr{0},\ltr{1}\}^*$ be context-free. Since
  $K_1\cap K_2\ne\emptyset$ if and only if
  $\ltr{1}K_1\cap \ltr{1}K_2\ne\emptyset$ and $\ltr{1}K_i$ is context-free for
  $i=0,1$, we may assume that $K_1,K_2\subseteq
  \ltr{1}\{\ltr{0},\ltr{1}\}^*$. This implies $K_1\cap K_2\ne\emptyset$ if and
  only if $\nu(K_1)\cap\nu(K_2)\ne\emptyset$.

  Let $P_2=\N\setminus 2^{\N}~\cup~2^{\nu(K_2)}$. Then $P_2\subseteq\N$ is a
  power-$\calH$ predicate, because $\calH$ includes the context-free
  languages. Thus, the language $L_2=\{\ltr{1}\ltr{0}^{n}\mid n\in P_2\}$
  belongs to $\cntlang{\power{\calH}}$ and
  \[ L_2=\{\ltr{1}\ltr{0}^n\mid n\in \N\setminus 2^{\N}\}\cup \{\ltr{1}\ltr{0}^{2^{\nu(w)}} \mid w\in K_2\}.\]
  Moreover, let $L_1 := \{ \ltr{1}\ltr{0}^{2^{\nu(w)}} \; | \; w \in
  K_1\}$. Since
  $L_1=\{ \ltr{1}\ltr{0}^{\nu(\ltr{1}\ltr{0}^{\nu(w)})} \; | \; w \in K_1 \}$
  and $K_1$ is an order-$1$ pushdown language, applying
  \cref{lem:bin_inc_stack_depth} twice yields that $L_1$ is an order-$5$
  pushdown language and thus belongs to $\calH$. Now clearly
  $L_1\cap L_2\ne\emptyset$ if and only if $\nu(K_1)\cap\nu(K_2)\ne\emptyset$,
  which is equivalent to $K_1\cap K_2=\emptyset$.
\end{proof}

For showing decidability of regular separability, we use the following well-known
fact (please see \cref{appendix-regsep-unions} for a proof).
\begin{lem}\label{regsep-unions}
  Let $L=\bigcup_{i=1}^m L_i$ and $K=\bigcup_{i=1}^n K_i$. Then $\regsep{K}{L}$
  if and only if $\regsep{L_i}{K_j}$ for all $i\in\{1,\ldots,m\}$ and $j\in\{1,\ldots,n\}$.
\end{lem}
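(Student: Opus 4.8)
The plan is to prove both implications using two standard facts: first, that regular separability is a symmetric relation, since if a regular $R$ satisfies $A\subseteq R$ and $R\cap B=\emptyset$, then the complement $\Sigma^*\setminus R$ is again regular and satisfies $B\subseteq\Sigma^*\setminus R$ and $(\Sigma^*\setminus R)\cap A=\emptyset$; and second, that the regular languages are effectively closed under finite unions and intersections.

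For the forward direction, I would start from a regular separator $R$ witnessing $\regsep{K}{L}$, so $K\subseteq R$ and $R\cap L=\emptyset$. Fixing $i\in\{1,\ldots,m\}$ and $j\in\{1,\ldots,n\}$, I note that $L_i\subseteq L$ gives $R\cap L_i=\emptyset$, hence $L_i\subseteq\Sigma^*\setminus R$, while $K_j\subseteq K\subseteq R$ gives $(\Sigma^*\setminus R)\cap K_j=\emptyset$. Thus $\Sigma^*\setminus R$ is a regular separator witnessing $\regsep{L_i}{K_j}$, and this works uniformly for all $i,j$.

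For the backward direction, assume that for every $i$ and $j$ there is a regular language $R_{i,j}$ with $L_i\subseteq R_{i,j}$ and $R_{i,j}\cap K_j=\emptyset$. I would then define
\[ S=\bigcup_{i=1}^m\ \bigcap_{j=1}^n R_{i,j}, \]
which is regular by closure under finite Boolean operations. To see $L\subseteq S$, observe that each $L_i$ is contained in $R_{i,j}$ for every $j$, hence in $\bigcap_{j=1}^n R_{i,j}$, hence in $S$, so $L=\bigcup_i L_i\subseteq S$. To see $S\cap K=\emptyset$, fix $j$: for each $i$ we have $\bigcap_{j'=1}^n R_{i,j'}\subseteq R_{i,j}$, which is disjoint from $K_j$, so taking the union over $i$ yields $S\cap K_j=\emptyset$; taking the union over $j$ then gives $S\cap K=\emptyset$. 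Hence $S$ witnesses $\regsep{L}{K}$, which by symmetry is $\regsep{K}{L}$.

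There is no genuine obstacle; the only point requiring care is the shape of the "sandwich" language $S$ — taking the union over the index set of $L$ and the intersection over the index set of $K$ — since swapping union and intersection would break one of the two inclusions. I would present the construction together with the two set-theoretic verifications above, which are routine.
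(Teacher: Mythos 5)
Your proof is correct and takes essentially the same route as the paper's: the forward direction restricts a separator for the unions to the individual pieces, and the backward direction builds exactly the same union-of-intersections separator $\bigcup_{i}\bigcap_{j} R_{i,j}$ with the same two set-theoretic verifications. The only (harmless) difference is that you explicitly invoke complementation/symmetry of regular separability to handle the swapped order of $K$ and $L$ in the statement, which the paper leaves implicit.
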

The last ingredient for our decision procedure is the following simple
but powerful observation from \cite{CzerwinskiZetzsche2019a} (for the
convenience of the reader, a proof can be found in \cref{appendix-move-transduction}).
\begin{lem}\label{move-transduction}
  Let $K\subseteq\Gamma^*$, $L\subseteq\Sigma^*$ and
  $T\subseteq\Sigma^*\times\Gamma^*$ be a rational transduction. Then
  $\regsep{L}{TK}$ if and only if $\regsep{T^{-1}L}{K}$.
\end{lem}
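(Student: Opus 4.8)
The plan is to prove both implications by exhibiting a separator explicitly, relying only on the fact that the class of regular languages is a full trio --- equivalently (by \cref{alt_fulltrio}), that it is closed under homomorphic and inverse homomorphic images and under intersection with regular languages, hence closed under applying an arbitrary rational transduction. I will also use that regular languages are closed under complement and that rational transductions are closed under inverse (swap the input and output components of an asynchronous transducer). Throughout, recall that regular separability is symmetric: if a regular $R$ witnesses $\regsep{A}{B}$, i.e.\ $A\subseteq R$ and $R\cap B=\emptyset$, then $\Sigma^*\setminus R$ witnesses $\regsep{B}{A}$.

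For the direction ``$\regsep{L}{TK}\Rightarrow\regsep{T^{-1}L}{K}$'', I would start from a regular $R\subseteq\Sigma^*$ with $L\subseteq R$ and $R\cap TK=\emptyset$, and propose $R':=T^{-1}R$ as a separator of $T^{-1}L$ and $K$. It is regular because regular languages are closed under rational transductions, and monotonicity of $T^{-1}$ gives $T^{-1}L\subseteq T^{-1}R=R'$. For disjointness, if some $v\in R'\cap K$, then by definition of $T^{-1}R$ there is $u\in R$ with $(u,v)\in T$; since $v\in K$ this means $u\in TK$, so $u\in R\cap TK$, contradicting the choice of $R$.

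For the converse, I would start from a regular $R'\subseteq\Gamma^*$ with $T^{-1}L\subseteq R'$ and $R'\cap K=\emptyset$. Now $K\subseteq\Gamma^*\setminus R'$, so $TK\subseteq T(\Gamma^*\setminus R')$, and I propose $S:=\Sigma^*\setminus T(\Gamma^*\setminus R')$. This is regular since $\Gamma^*\setminus R'$ is regular, applying $T$ preserves regularity, and complementation preserves regularity. By construction $S\cap TK=\emptyset$, so it only remains to check $L\subseteq S$, i.e.\ $L\cap T(\Gamma^*\setminus R')=\emptyset$. If some $u$ lay in this intersection, there would be $v\in\Gamma^*\setminus R'$ with $(u,v)\in T$ and $u\in L$, whence $v\in T^{-1}L\subseteq R'$, a contradiction. (Alternatively, this direction follows from the first one: by symmetry $\regsep{T^{-1}L}{K}$ gives $\regsep{K}{T^{-1}L}$; applying the already-proven implication to the rational transduction $T^{-1}$ yields $\regsep{(T^{-1})^{-1}K}{L}=\regsep{TK}{L}$, and symmetry again gives $\regsep{L}{TK}$.)

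There is no genuinely hard step here; the only thing to watch is the bookkeeping of the directions of $T$ and $T^{-1}$, so that all images land over the correct alphabet, and invoking the right closure property of the regular languages in each case (image and inverse image under a rational transduction, plus complement).
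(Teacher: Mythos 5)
Your proof is correct and follows essentially the same route as the paper: the forward direction uses $T^{-1}R$ as the separator exactly as the paper does, and your parenthetical remark for the converse (symmetry of separability plus the forward direction applied to $T^{-1}$, using $(T^{-1})^{-1}=T$) is precisely the paper's argument. Your explicit separator $\Sigma^*\setminus T(\Gamma^*\setminus R')$ for the converse is a correct, slightly more hands-on alternative, but it does not constitute a different approach.
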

The following now completes the proof of \cref{undecidable-inter}.
\begin{lem}
  The problem $\RegSep{\calH}{\cntlang{\power{\calH}}}$ is decidable.
\end{lem}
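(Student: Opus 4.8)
The plan is to show that $\RegSep{\calH}{\cntlang{\power{\calH}}}$ reduces to $\Inf{\calH}$, which is decidable by \cref{finiteness-hopa}. Suppose we are given $L\in\calH$ and $K\in\cntlang{\power{\calH}}$. By \cref{lem:pseudoCtMc_rat_trans}, $K$ is a finite union $K=\bigcup_{i=1}^n T_i\ltr{a}^{P_i}$ where each $P_i\in\power{\calH}$ and $T_i\subseteq\Sigma^*\times\ltr{a}^*$ is rational; write $L=\bigcup_{j=1}^m L_j$ trivially (i.e., $m=1$). By \cref{regsep-unions}, $\regsep{L}{K}$ if and only if $\regsep{L}{T_i\ltr{a}^{P_i}}$ for every $i$. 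By \cref{move-transduction}, $\regsep{L}{T_i\ltr{a}^{P_i}}$ holds if and only if $\regsep{T_i^{-1}L}{\ltr{a}^{P_i}}$. Since $\calH$ is a full trio (it is closed under rational transductions), $L_i':=T_i^{-1}L$ again belongs to $\calH$, and moreover $L_i'\subseteq\ltr{a}^*$, so $L_i'=\ltr{a}^{S_i}$ for some $S_i\subseteq\N$.

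So it remains to decide, given an order-$k$ pushdown language $\ltr{a}^{S}$ and a power-$\calH$ predicate $P=\N\setminus 2^{\N}\cup\{2^{\nu(w)}\mid w\in M\}$ with $M\in\calH$, whether $\regsep{\ltr{a}^{S}}{\ltr{a}^{P}}$. Since $\N\setminus 2^{\N}\subseteq P$, \cref{lem:unary_reg_sep_pow2} applies: $\regsep{\ltr{a}^{S}}{\ltr{a}^{P}}$ holds if and only if $S$ is finite and disjoint from $P$. Now I would argue the two conditions are jointly decidable. Disjointness: $S\cap P=\emptyset$ is equivalent to $S\cap(\N\setminus 2^\N)=\emptyset$ together with $S\cap\{2^{\nu(w)}\mid w\in M\}=\emptyset$; the first condition says $\ltr{a}^S\subseteq\ltr{a}^{2^\N}$, i.e. $\ltr{a}^S\cap\ltr{a}^{\N\setminus 2^\N}=\emptyset$, an intersection-emptiness question for $\calH$ against the regular language $\ltr{a}^{\N\setminus 2^\N}=\langof{\ltr{1}\{\ltr{0},\ltr{1}\}^*\ltr{1}\{\ltr{0},\ltr{1}\}^*}$ pulled back — but intersection with regular languages is part of the full-trio closure and emptiness for $\calH$ is decidable, so this is decidable. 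Finiteness: once we know $S\subseteq 2^\N$, finiteness of $S$ is equivalent to finiteness of $\{\ltr{0}^n\mid \ltr{a}^{2^n}\in\ltr{a}^S\}$; but a rational transduction turns $\ltr{a}^S$ into this set (reading $2^n$ copies of $\ltr{a}$ and outputting... ) — more carefully, I would first observe that from $\ltr{a}^S\in\calH$ with $S\subseteq 2^\N$ one cannot directly take logarithms inside $\calH$, so instead I compare against $M$ directly.

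The cleaner route for the remaining case: assuming $S\subseteq 2^\N$, write $S=\{2^n\mid n\in S'\}$. Then $S$ is finite iff $S'$ is finite, and $S\cap P=\emptyset$ iff $S'\cap\nu(M)=\emptyset$ (using $\nu(M)\subseteq\N$ and that $2^a=2^b$ implies $a=b$). Now $\ltr{a}^S$ being an order-$k$ pushdown language means, via \cref{lem:bin_inc_stack_depth}-style reasoning run in reverse or a direct construction, that one can extract an order-$(k')$ pushdown language over $\{\ltr{0},\ltr{1}\}$ whose $\nu$-image is $S'$ — but this direction (taking $\log_2$) is not available. Instead, I handle finiteness and disjointness by reducing to $\Inf{\calH}$ and $\Empty{\calH}$ applied to languages built from $M$ and $L$. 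For disjointness, $S\cap\{2^{\nu(w)}\mid w\in M\}\ne\emptyset$ iff the $\calH$-language $\{\ltr{1}\ltr{0}^{2^{\nu(w)}}\mid w\in M\}$ (in $\calH$ by \cref{lem:bin_inc_stack_depth} applied twice, exactly as in the proof of \cref{lem:undec_inter}) has nonempty intersection with $\{\ltr{1}\ltr{0}^n\mid\ltr{a}^n\in\ltr{a}^S\}$ (in $\calH$ via a rational transduction), decidable by full-trio closure and $\Empty{\calH}$. For finiteness of $S$: since $\ltr{a}^S\in\calH$, decidability of $\Inf{\calH}$ directly tells us whether $\ltr{a}^S$ is infinite, i.e. whether $S$ is finite. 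Combining: $\regsep{\ltr{a}^S}{\ltr{a}^P}$ is decidable, hence so is $\RegSep{\calH}{\cntlang{\power{\calH}}}$.

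The main obstacle I anticipate is the finiteness check interacting with disjointness: \cref{lem:unary_reg_sep_pow2} needs both ``$S$ finite'' \emph{and} ``$S\cap P=\emptyset$'', and while $\Inf{\calH}$ immediately gives finiteness of $\ltr{a}^S$, the subtlety is that disjointness from the $2^{\nu(M)}$ part of $P$ must be phrased as an $\calH$-emptiness question rather than reasoning about $S'=\log_2 S$, since $\calH$ is not obviously closed under the operation $\ltr{a}^{2^n}\mapsto\ltr{a}^n$. The fix is to stay on the ``powered'' side throughout, building $\{\ltr{1}\ltr{0}^{2^{\nu(w)}}\mid w\in M\}\in\calH$ as in \cref{lem:undec_inter} and intersecting with a transduction image of $\ltr{a}^S$; I expect verifying that this intersection-emptiness question faithfully captures $S\cap 2^{\nu(M)}=\emptyset$, together with the separate handling of the $\N\setminus 2^\N$ part, to be the only place needing care.
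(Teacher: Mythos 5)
Your reduction down to deciding $\regsep{\ltr{a}^S}{\ltr{a}^P}$ for a unary $\calH$-language $\ltr{a}^S$ and a power-$\calH$ predicate $P$, and your use of \cref{lem:unary_reg_sep_pow2} together with $\Inf{\calH}$ for the finiteness test, match the paper exactly. The gap is in the disjointness test. You propose to decide $S\cap\{2^{\nu(w)}\mid w\in M\}=\emptyset$ by checking emptiness of the intersection of the two $\calH$-languages $\{\ltr{1}\ltr{0}^{2^{\nu(w)}}\mid w\in M\}$ and $\{\ltr{1}\ltr{0}^n\mid \ltr{a}^n\in\ltr{a}^S\}$, claiming this is ``decidable by full-trio closure and $\Empty{\calH}$''. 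Full-trio closure only gives intersection with \emph{regular} languages; neither of these two languages is regular, and emptiness of the intersection of two $\calH$-languages is $\Inter{\calH}{\calH}$, which is undecidable (it contains intersection emptiness of context-free languages). Worse, the instance you build is essentially the one used in the proof of \cref{lem:undec_inter} to establish that $\Inter{\calH}{\cntlang{\power{\calH}}}$ is undecidable, so this subroutine cannot exist. A similar objection applies to your check of $S\cap(\N\setminus 2^\N)=\emptyset$: the \emph{unary} language $\ltr{a}^{\N\setminus 2^\N}$ is not regular (only its binary encoding $\ltr{1}\{\ltr{0},\ltr{1}\}^*\ltr{1}\{\ltr{0},\ltr{1}\}^*$ is), so this is not an instance of intersection with a regular language either.

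The way out, which is what the paper does, is to exploit the order of the two conditions in \cref{lem:unary_reg_sep_pow2}: first decide finiteness of $\ltr{a}^S$ via \cref{finiteness-hopa}; if it is infinite, answer ``not separable'' without ever testing disjointness. If it is finite, one can effectively \emph{compute} the finite set $S$: enumerate candidate words, test membership, and semi-decide completeness of the current list $F_i$ by checking $\ltr{a}^S\cap(\ltr{a}^*\setminus F_i)=\emptyset$, which genuinely is an intersection with a regular language followed by $\Empty{\calH}$. Disjointness from $P$ then reduces to finitely many membership queries $m\in P$, each decidable since one can test whether $m$ is a power of two and, if $m=2^j$, whether the binary representation of $j$ lies in $M$ (membership for $\calH$ is decidable). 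Your proposal never passes to an explicit finite set, and that is precisely the step that lets the disjointness check avoid the undecidable intersection problem.
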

\begin{proof}
  Suppose we are given $L_1\subseteq\Sigma^*$ from $\calH$ and
  $L_2\subseteq\Sigma^*$ from $\cntlang{\power{\calH}}$. Then we can write
  $L_2=\bigcup_{i=1}^n T_i\ltr{a}^{P_i}$, where for $1\le i\le n$,
  $T_i\subseteq \Sigma^*\times\ltr{a}^*$ is a rational transduction and
  $P_i\subseteq\N$ is a power-$\calH$ predicate. Since $\regsep{L_1}{L_2}$ if
  and only if $\regsep{L_1}{T_i\ltr{a}^{P_i}}$ for every $i$ (\cref{regsep-unions}), we may assume
  $L_2=T\ltr{a}^{P}$ for $T\subseteq\Sigma^*\times\ltr{a}^*$ rational and
  $P\subseteq\N$ a power-$\calH$ predicate. According to
  \cref{move-transduction}, $\regsep{L_1}{T\ltr{a}^P}$ if and only if
  $\regsep{T^{-1}L_1}{\ltr{a}^P}$. Since $T^{-1}$ is also a rational
  transduction and $\calH$ is a full trio, we may assume that $L_1$ is in
  $\calH$ with $L_1\subseteq \ltr{a}^*$ and $L_2=\ltr{a}^P$.

  By \cref{lem:unary_reg_sep_pow2}, we know that $\regsep{L_1}{\ltr{a}^P}$ if
  and only if $L_1$ is finite and disjoint from $\ltr{a}^P$. We can decide this
  as follows.  First, using \cref{finiteness-hopa} we check whether $L_1$ is
  finite. If it is not, then we know that $\regsep{L_1}{L_2}$ is not the case.
  
  If $L_1$ is finite, then we can compute a list of all words in $L_1$: We start
  with $F_0=\emptyset$ and then successively compute finite sets
  $F_i\subseteq L_1$. For each $i\in\N$, we check whether $L_1\subseteq F_i$,
  which is decidable because $L_1\cap (\ltr{a}^*\setminus F_i)$ is in
  $\calH$ and
  emptiness is decidable for $\calH$.  If $L_1\not\subseteq F_i$, then
  we
  enumerate words in $\ltr{a}^*$ until we find $\ltr{a}^m$ with $
  \ltr{a}^m\in L_1$ (membership in
  $L_1$ is decidable) and $\ltr{a}^m\notin F_i$. Then, we set
  $F_{i+1}=F_i\cup \{\ltr{a}^m\}$. Since $L_1$ is finite, this procedure must
  terminate with $F_i=L_1$.
  Now we have $\regsep{L_1}{\ltr{a}^P}$ if and only if
  $F_i\cap \ltr{a}^P=\emptyset$. The latter can be checked because
  $\power{\calH}$
  predicates are decidable.
\end{proof}



\section{Conclusion} 
\label{sec:conclusion}
We have presented a language class $ \mathcal{C}_1$ for which
intersection emptiness is decidable but regular separability is
undecidable in
\cref{sec:decidable_intersection_and_undecidable_regular_separability}.
Similarly, in
\cref{sec:decidable_regular_separability_and_undecidable_intersection}
we constructed $ \mathcal{C}_2,\mathcal{D}_2$ for which intersection
emptiness is undecidable but regular separability is decidable. All
three language classes enjoy good language theoretic properties in
that they are full trios and have a decidable emptiness
problem.
 
Let us provide some intuition on why these examples work. The
underlying observation is that intersection emptiness of two sets is
insensitive to the shape of their members: If $f\colon X\to Y$ is any
injective map and $S$ disjoint from the image of $f$, then for
$A,B\subseteq X$, we have $A\cap B=\emptyset$ if and only if
$(f(A)\cup S)\cap f(B)=\emptyset$. Regular separability, on the other
hand, is affected by such distortions: For example, if
$K,L\subseteq\ltr{1}\{\ltr{0},\ltr{1}\}^*$ are infinite, then
$\ltr{a}^{\N\setminus 2^{\N}}\cup \ltr{a}^{2^{\nu(K)}}$ and
$\ltr{a}^{2^{\nu(L)}}$ are never regularly separable, even if $K$ and
$L$ are.  Hence, roughly speaking, the examples work by distorting
languages (using encodings as numbers) so that intersection emptiness
is preserved, but regular separability reflects infinity of the input
languages. We apply this idea to language classes where intersection is
decidable, but infinity is not (\cref{undecidable-sep}) or the other way around
(\cref{undecidable-inter}). All this suggests that regular
separability and intersection emptiness are fundamentally different
problems.

Moreover, our results imply any simple combinatorial decision problem
that characterizes regular separability has to be incomparable with
intersection emptiness.  Consider for example the \emph{infinite
  intersection problem} as a candidate. It asks whether two given
languages have an infinite intersection.  Note that for
$L \in \calC, K \in \calD$ we have $L \cap K \neq \emptyset$ if and
only if $L\#^*$ and $K\#^*$ (where $\#$ is a symbol not present in $L$
or $K$) have infinite intersection. Moreover, as full trios, $\calC$
and $\calD$ effectively contain $L\#^*$ and $K\#^*$,
respectively. This implies a counterexample with decidable regular
separability and undecidable infinite intersection.

While the example from
\cref{sec:decidable_intersection_and_undecidable_regular_separability}
is symmetric (meaning: the two language classes are the same) and
natural, the example in
\cref{sec:decidable_regular_separability_and_undecidable_intersection}
is admittedly somewhat contrived: While pseudo-$\calC$ predicates rely
on the common conversion of binary into unary representations,
power-$\calC$ predicates are a bit artificial.  It would be
interesting if there were a simpler symmetric example with decidable
regular separability and undecidable intersection.


\bibliography{main}
\appendix
\section{Proof of Lemma \ref{thm:LCM_closure_rat_trans}}
\label{proof:reset_clo}
  Closure under union is easy to see, so let us show closure under
  rational transductions and intersection. Suppose
  $\calV=(Q,\Sigma,n,E,q_0,F)$ is a reset VASS and a
  transducer
  $\calT=(\tilde{Q},\Gamma,\Sigma,\tilde{E},\tilde{q}_0,\tilde{F})$
  accepting $T\subseteq\Gamma^*\times\Sigma^*$. We now think of
  $\calV$ producing output instead of reading input, which will be
  reflected by our formulations: Here, we say that an edge
  $(q,w,i,x,q)$ \emph{produces the word $w$}. This is because we
  imagine that $\calV$ is producing a word that is read and
  transformed by $\calT$, which then produces some output on its own.

  We may assume that every edge in $\calV$ is of the form
  $(q,a,i,x,q')$ with $a\in\Sigma\cup\{\varepsilon\}$. Moreover, we
  may assume that every edge in $\calT$ is of the form
  $(q,a,\varepsilon,q')$ with $a\in\Gamma$ or of the form
  $(q,\varepsilon,a,q')$ with $a\in\Sigma$.

  Our reset VASS $\calV'$ for $T(\langof{\calV})$ is obtained
  using a simple product construction. The state set of $\calV'$ is
  $Q'=Q\times \tilde{Q}$ and in every edge of $\calV'$, we either
  (i)~advance both $\calV$ and $\calT$---because $\calT$ is reading a
  symbol produced by $\calV$---or (ii)~only $\calT$---because $\calT$
  is producing some output---or (iii)~only $\calV$ because it makes a
  move that does not produce output.
%
  Formally, we include the following edges in $\calV'$:
  \begin{alignat*}{3}
    & \text{(i)}&\quad &((p,\tilde{p}),\varepsilon,i,x,(q,\tilde{q})) &\quad &\text{for each $(q,a,i,x,q)\in E$ and $(\tilde{p},\varepsilon,a,\tilde{q})\in \tilde{E}$ with $a\in\Sigma$,} \\
    & \text{(ii)}&&((p,\tilde{p}),a,1,0,(p,\tilde{q})) & & \text{for each $(\tilde{p},a,\varepsilon,\tilde{q})\in \tilde{E}$ and each $p\in Q$}\\
   & \text{(iii)}& &((p,\tilde{p}),\varepsilon,i,x,(q,\tilde{p}))& &\text{for each $(p,\varepsilon,i,x,q)\in E$ and each $\tilde{p}\in\tilde{Q}$.}  
 \end{alignat*}
 The state $(q_0,\tilde{q}_0)$ is the initial state of $\calV'$ and
 the states in $F\times\tilde{F}$ are final.  Then we clearly have
 $\langof{\calV'}=T(\langof{\calV})$.

 Let us now show that $\calR$ is closed under intersection and let
 $\calV_i=(Q_i,\Sigma,n,E,q_0^{(i)},F_i)$ for $i\in\{0,1\}$ be reset VASS (it is no loss of generality to assume that they both have
 $n$ counters). Again, we construct a reset VASS $\calV'=(Q',\Sigma,2n,E',q_0',F')$
 with $\langof{\calV'}=\langof{\calV_0}\cap\langof{\calV_1}$ using a
 simple product construction with $Q'=Q_0\times Q_1$. We may assume that every edge is
 $\calV_0$ and $\calV_1$ is of the form $(q,\varepsilon,i,x,q')$ or
 $(q,a,1,0,q')$, meaning that either an edge reads input or it
 operates on some counter. $\calV'$ has $2n$ counters and has three types of edges:
 (i)~$\calV_0$ and $\calV_1$ both advance and both read some letter $a\in\Sigma$, (ii)~just $\calV_0$ advances and (iii)~just $\calV_1$ advances. Formally, this means we have
 \begin{alignat*}{3}
 &\text{(i)}&\quad  &((p_0,p_1),a,1,0,(q_0,q_1)) &\quad& \text{for edges $(p_i,a,1,0,q_i)\in E_i$ in $i=0,1$} \\
 &\text{(ii)}& &((p_0,p_1),\varepsilon,i,x,(q_0,p_1))&&\text{for each
 $(p_0,\varepsilon,i,x,q_0)\in E_0$ and $p_1\in Q_1$} \\
  &\text{(iii)}&   &((p_0,p_1),\varepsilon,n+i,x,(p_0,q_1))&&\text{for each $(p_1,\varepsilon,i,x,q_1)\in E_1$ and $p_0\in Q_0$}.
\end{alignat*}
The initial state of $\calV'$ is $q'_0=(q_0^{(0)},q_0^{(1)})$ and the
final states of $\calV'$ are $F'=F_0\times F_1\subseteq Q'$. Then
clearly $\langof{\calV'}=\langof{\calV_0}\cap\langof{\calV_1}$.

\section{Details of Proof of \cref{lem:posExist_presArithLCM_truth_dec}}
\label{proof:truth_problem}
We show by structural induction on the formula that every formula in
$\logic$ defines a relation which belongs to $\pseudo{\calR}$ and a
reset VASS for L can be effectively computed.

\textbf{Atomic formulae: } $S(x)$ where $S \in \pseudo{\calR}$
	is by
	definition and one can construct automata which recognise the
	relations $x+y=z$ and $x=1$ (see \cite{blumensath2004finite}). \\
	\textbf{Induction: } Let $\phi(\bar{x},\bar{y},\bar{z}) =\phi_1(
	\bar{x},\bar{y}) \wedge \phi_2(\bar{x},\bar{z})$. By induction
	hypothesis, $\phi_1(\bar{x},\bar{y})$ and $\phi_2(\bar{x},\bar{z})$
	define $\pseudo{\calR}$ relations $L_1$ and $L_2$ respectively.

        An $m$-ary relation $P'\subseteq\N^m$ is said to be the
        \textit{cyclindrification} of a $k$-ary relation
        $P\subseteq\N^k$ if there exist indices $i_1,\ldots,i_k$ with
        $1 \leq i_1<i_2<\ldots<i_k \leq m$ such that
        $P'=\{ \bar{w}\in\N^m \mid (w_{i_1},w_{i_2},\ldots ,w_{i_k})
        \in P\}$.  If $P$ is a pseudo-$\calR$ relation then any
        cylindrification $P'$ of $P$ is also a pseudo-$\calR$
        relation: It is easy to construct a rational transduction
        $T\subseteq (\{\ltr{0},\ltr{1},\Box\}^k)^*\times
        (\{\ltr{0},\ltr{1},\Box\}^m)^*$ with $L_{P'}=TL_P$ and $\calR$
        is closed under rational
        transductions by~\cref{thm:LCM_closure_rat_trans}.

	Let
	$L_1',L_2'$ respectively be
	the cylindrification of $L_1,L_2$ w.r.t. $\bar{x},\bar{y},\bar{z}$.
	Then the language defined by $\phi$ is $L_1' \cap L_2'$, which belongs to $\calR$ according to \cref{thm:LCM_closure_rat_trans}.\\
	Similarly $\phi(\bar{x},\bar{y},\bar{z}) =\phi_1(
	\bar{x},\bar{y}) \wedge \phi_2(\bar{x},\bar{z})$ is the union of the
	appropriate cylindrifications of the languages corresponding to
	$\phi_1$ and $\phi_2$ and again \cref{thm:LCM_closure_rat_trans}
	applies. \\
	Let $\phi(\bar{x})=\exists y \; \phi'(y,\bar{x})$. Let $R$ (resp.
	$R'$) be the
	relation defined by $\phi$ (resp. $\phi'$). By
	induction hypothesis $L_R \in \calR$ and since $L_{R'}$ is a
	homomorphic image of $L_R$, \cref{thm:LCM_closure_rat_trans} it
	tells us that $L_{R'}$ belongs effectively to $\calR$.

\section{Details of Proof of \cref{lem:bin_inc_stack_depth}}
\label{proof:stack_termination}
We want to show that the machine $\calA'$ reaches $q'_f$ in finitely
  many steps. To accomplish this,
   we define a new parameter
  on the stack which is shown to strictly decrease on every
  transition. We start by defining the lexicographic order on finite
  sequence of numbers.  For $\bar{x}=(x_1,\ldots,x_n)\in\N^n$ and
  $\bar{y}=(y_1,\ldots,y_k)\in\N^k$, the lex order $\lelex$ is defined
  inductively by
  \[ \bar{x} \lelex \bar{y} \iff 
    x_1 < y_1 \text{ or } (x_1=y_1 \text{ and } (x_2,\ldots,x_n)
    \lelex (y_2,\ldots,y_k))\]
  Moreover, the empty sequence is smaller than any non-empty
  sequence.

  Using $\lelex$, we can now define an order on stacks. To this end, for an order-$(k+1)$ stack
  $s=[\llbracket \#\rrbracket_k,s_1\cdots s_m]_{k+1}$, we define its
  \emph{length} as $|s|=m+1$. Then, to an order-$(k+2)$ stack
  $t=[\llbracket\bot'\rrbracket t_1\cdots t_\ell]$, we associate the
  sequence $\sigma(t)=(|t_1|, \ldots, |t_\ell|)$. Finally, we set $t\lelex t'$
  if and only if $\sigma(t)\lelex\sigma(t')$.

  We can see that for any transition $(p,t) \xrightarrow{
  a|a'|v}(p,t')$ with $t=[\llbracket\bot'\rrbracket_{k+1}
  t_1\cdots
  t_{\ell-1} t_\ell]_{k+2}$ and $t_\ell=[\llbracket\#\rrbracket_k s_1\cdots s_m]_{k+1}$
  and $\sigma(t)=(x_1,x_2,\ldots,x_n)$, we have $t'\lelex t$:
  \begin{itemize}
    \item If $(a|a'|v)=(\varepsilon|\ltr{0}|\pop{k+1}\push{k+2})$
    then $\sigma(t')=(x_1,x_2,\ldots,x_{n-1},x_n-1,x_n-1)$.
    \item If $(a|a'|v)=(\ltr{0}|\ltr{1}|\pop{k+1}
    \push{k+2})$
    then $\sigma(t')=(x_1,x_2,\ldots,x_{n-1},x_n-1,x_n-1)$.
    \item If $(a|a'|v)=(\varepsilon|\#|\pop{k+2}
    )$
    then $\sigma(t')=(x_1,x_2,\ldots,x_{n-1})$.
  \end{itemize}
  Hence $\calA'$ must reach $q'_f$ in finitely many steps.

  \section{Proof of \cref{regsep-unions}}\label{appendix-regsep-unions}
  A separator witnessing $\regsep{X}{Y}$ also witnesses $\regsep{X_i}{Y_j}$
  for every $i\in[1,n]$ and $j\in[1,m]$. This shows the ``only if'' direction.

  For the ``if'' direction, suppose $R_{i,j}\subseteq M$ satisfies
  $X_i\subseteq R_{i,j}$ and $R_{i,j}\cap Y_j=\emptyset$. We claim that
  $R=\bigcup_{i=1}^n \bigcap_{j=1}^m R_{i,j}$ witnesses $\regsep{X}{Y}$.

  Since $X_i\subseteq R_{i,j}$ for every $i\in[1,n]$, we have
  $X_i\subseteq\bigcap_{j=1}^m R_{i,j}$ and hence
  $X=\bigcup_{i=1}^n X_i\subseteq R$. On the other hand, for every $i\in[1,n]$ and $k\in[1,m]$,
  we have $Y_k\cap R_{i,k}=\emptyset$ and thus $Y_k\cap \bigcap_{j=1}^m R_{i,j}=\emptyset$.
  This implies 
  \[ Y\cap R=\bigcup_{k=1}^m Y_k\cap \bigcup_{i=1}^n\bigcap_{j=1}^m R_{i,j}=\bigcup_{k=1}^m \bigcup_{i=1}^n \underbrace{Y_k\cap \bigcap_{j=1}^m R_{i,j}}_{=\emptyset}=\emptyset. \]

  \section{Proof of \cref{move-transduction}}\label{appendix-move-transduction}
  The following proof is from \cite{CzerwinskiZetzsche2019a}, but we include it for the convenience of the reader.
    \begin{proof}
      Suppose $L\subseteq R$ and $R\cap TK=\emptyset$ for some regular
      $R$. Then clearly $T^{-1}L\subseteq T^{-1}R$ and $T^{-1}R\cap
      K=\emptyset$. Therefore, the regular set $T^{-1}R$ witnesses
      $\regsep{T^{-1}L}{K}$. Conversely, if $\regsep{T^{-1}L}{K}$, then
      $\regsep{K}{T^{-1}L}$ and hence, by the first direction,
      $\regsep{(T^{-1})^{-1}K}{L}$. Since $(T^{-1})^{-1}=T$, this reads
      $\regsep{TK}{L}$ and thus $\regsep{L}{TK}$.
    \end{proof}

\end{document}